\documentclass[12pt]{amsart}
\usepackage{amssymb}
\usepackage{hyperref}
\usepackage{amsmath}
\usepackage{amscd}
\usepackage{graphicx}
\usepackage{xcolor}
\usepackage{bm}
\usepackage{comment}

\DeclareMathSymbol{\shortminus}{\mathbin}{AMSa}{"39}

\newcommand{\group}[1]{\mathrm{#1}}

\newcommand{\rep}[1]{\rep{#1}}

\newcommand{\Tr}{\operatorname{Tr}}

\newcommand{\End}{\operatorname{End}}

\newcommand{\C}{\mathbb{C}}
\newcommand{\D}{\mathbb{D}}

\newcommand{\E}{\mathbb{E}}
\newcommand{\N}{\mathbb{N}}
\renewcommand{\P}{\mathbb{P}}
\renewcommand{\H}{\mathbb{H}}

\newcommand{\R}{\mathbb{R}}
\newcommand{\U}{\mathbb{U}}

\newcommand{\W}{\mathbf{W}}

\newcommand{\X}{\mathfrak{X}}
\newcommand{\Y}{\mathfrak{Y}}

\newtheorem{thm}{Theorem}[section]
\newtheorem{prop}[thm]{Proposition}

\newtheorem{cor}[thm]{Corollary}

\theoremstyle{definition}

\title[Disordered Schur Measures]{Disordered Schur Measures}

\author{Jonathan Novak}

\begin{document}


\maketitle

\section*{Introduction}
Okounkov's Schur measures are multiparameter generalizations of the geometric distribution from 
integers to integer partitions \cite{Okounkov}. They provide model examples of biorthogonal ensembles, 
a basic class of determinantal point processes isolated by Borodin \cite{Borodin}.
There are many ways in which one may specialize the parameters of Schur measures 
to obtain particular distributions of interest \cite{BBW,OkounkovUses,Walsh}. Here, 
we instead randomize parameters to obtain disordered Schur measures
which exhibit behavior reminiscent of spin glasses. 

We build quenched disorder into a Schur measure
by sampling its parameters from the Circular Unitary Ensemble \cite{Forrester},
and proceed to study thermodynamic features of the resulting disordered system. 
We first compute the quenched and annealed free energies of 
CUE-Schur measure and show that they are strictly 
separated in the thermodynamic limit, with disorder gap given by an explicit analytic function of fugacity.
We then give a representation-theoretic formula for moments of the partition function, and
show that the free energy converges in distribution to a random analytic function of fugacity
whose coefficients are iid exponential random variables. 

Next, we pass to a double scaling regime in which fugacity approaches its critical value as the 
particle number increases. We show that in this near-critical scaling the free energy becomes extensive
and the free energy density self-averages, with thermodynamic limit given by 
an explicit non-random function of the critical scaling parameter. Finally, we combine
self-averaging with the fact that the free energy is a completely 
monotone function of the critical parameter to obtain a quenched law of large numbers for the 
size of a random partition sampled from near-critical CUE-Schur measure. 

These results paint a thermodynamic picture of CUE-Schur measures which 
supports the spin glass analogy. Going further, a natural next step would be to study
the size of the overlap of two conditionally independent replicas from CUE-Schur measure.
In another direction, it would be interesting to analyze the distribution of the largest
part of a single sample --- for nonrandom
Schur measures this statistic has Tracy-Widom fluctuations in some specializations \cite{Matsumoto}
but not others \cite{BBW}.

There are (at least) two natural ways to generalize our CUE-Schur construction:
first, one may consider disordered Jack measures whose parameters
are sampled from the Circular Beta Ensemble; second, one may consider dynamical disordered Schur measures whose parameters are
the eigenvalues of Brownian motion on the unitary group. Since the primary goal of the 
present paper is to identify a new paradigm --- random partitions with random matrix disorder ---
we focus solely on the simplest such model and leave the broader exploration of this new landscape to 
future work \cite{NiNo}. 

\subsubsection*{Acknowledgement}
I thank Gerard Ben Arous for a friendly lesson in spin glass basics.

\section{Schur Measures}
In this background section we introduce (nonrandom) Schur measures.

\subsection{Partitions and particles}
For each $N \in \N,$ let $\Y_N$ denote the set of nonnegative integer vectors $\lambda=(\lambda_1,\dots,\lambda_N)$
with weakly decreasing coordinates and let $\X_N$ denote the set of nonnegative integer 
vectors $\kappa=(\kappa_1,\dots,\kappa_N)$ with strictly decreasing coordinates. 
A vector $\lambda \in \Y_N$ can be viewed as a partition of the number

    \begin{equation}
        |\lambda|=\lambda_1+\dots+\lambda_N
    \end{equation}

\noindent
having at most $N$ nonzero parts, or equivalently as a Young diagram with exactly $|\lambda|$ cells and at most 
$N$ rows. A vector $\kappa \in \X_N$ can be viewed as an $N$-element set 

    \begin{equation}
        \{\kappa_1,\dots,\kappa_N\} \subset \N_0,
    \end{equation}

\noindent
or equivalently as a configuration of $N$ fermions on $\N_0.$ 
There is a bijection 
$\Y_N \to \X_N$ which sends $\lambda=(\lambda_1,\dots,\lambda_N)$
to $\kappa(\lambda)=(\kappa_1(\lambda),\dots,\kappa_N(\lambda))$ with coordinates

    \begin{equation}
        \kappa_i(\lambda) = N-i + \lambda_i, \quad 1 \leq i \leq N.
    \end{equation}

\noindent
The bijection $\lambda \mapsto \kappa(\lambda)$ parameterizes particle configurations by displacement from
the extremal fully-packed configuration $(N-1,N-2,\dots,1,0).$

Each $\lambda \in \Y_N$ also corresponds to an irreducible polynomial representation
$(\W_N^\lambda,S_N^\lambda)$ of the unitary group, $\group{U}_N$. The corresponding
character,

    \begin{equation}
        s_\lambda(U) = \Tr S_N^\lambda(U),
    \end{equation}

\noindent
is a symmetric polynomial in the eigenvalues $u_1,\dots,u_N$ of
the unitary matrix $U$, homogeneous
of degree $|\lambda|.$ This is the Schur polynomial corresponding to $\lambda \in \Y_N$,
and we have \cite{Bump}

    \begin{equation}
        s_\lambda(U) = \frac{\det[u_j^{N-i+\lambda_i}]}{\det[u_j^{N-i}]}.
    \end{equation}

\subsection{Schur measures} 
Let $(U_N)_{N=1}^\infty$ be a sequence of unitary matrices with $U_N \in \group{U}_N.$
For each $N \in \N,$ the corresponding Schur measure $\P_N=\P_N(q,U_N)$ on $\Y_N$ is defined by

    \begin{equation}
    \label{eqn:SchurMeasure}
        \P_N(\lambda) = \frac{q^{|\lambda|}}{Z_N}|s_\lambda(U_N)|^2, 
    \end{equation}

\noindent
where $q \in (0,1)$ is a fugacity parameter. While $\P_1$ is 
simply the geometric distribution on $\Y_1=\N_0$ with parameter $q$, for $N >1$ the environment
$U_N$ matters. The partition function

    \begin{equation}
        Z_N=\sum_{\lambda \in \Y_N}q^{|\lambda|}|s_\lambda(U_N)|^2 = 
        \sum_{\lambda \in \Y_N}q^{|\lambda|} s_\lambda(U_N)s_\lambda(U_N^{-1})
    \end{equation}

\noindent
is given by Cauchy's generalization of the geometric series \cite{Bump},

    \begin{equation}
    \label{eqn:Cauchy}
        Z_N=\frac{1}{\det(I-qU_N \otimes U_N^{-1})}=\prod_{i,j=1}^N \frac{1}{1-qu_i\overline{u}_j}.
    \end{equation}

If we instead view $\P_N$ as a probability measure on $\X_N,$ we are looking at an
$N$-point process in $\N_0=\{0,1,2,\dots\}.$ This process is determinantal:
rewriting \eqref{eqn:SchurMeasure} as

    \begin{equation}
        \P_N(\kappa) = \frac{1}{\widetilde{Z}_N}\det[u_j^{\kappa_i}]\det[\overline{u}_j^{\kappa_i}]
        \prod_{i=1}^N q^{\kappa_i},
    \end{equation}

\noindent
where the rescaled partition function is

    \begin{equation}
        \widetilde{Z}_N=Z_N \prod_{i<j} q|u_i-u_j|^2,
    \end{equation}

\noindent
makes it clear that we have a biorthogonal particle ensemble in the 
sense of Borodin \cite{Borodin}. 
From Cauchy's identity \eqref{eqn:Cauchy}, the free
energy of the ensemble is

    \begin{equation}
    \label{eqn:Miwa}
        \log Z_N = \sum_{d=1}^\infty \frac{q^d}{d} |\Tr U_N^d|^2.
    \end{equation}

\subsection{Cyclotomic Schur measures}
As an example, let $(U_N)_{N=1}^\infty$ be a sequence of unitary matrices such that the 
eigenvalues of $U_N$ are the distinct $N$th roots of unity. In this case, 

    \begin{equation}
        \Tr U_N^d = \begin{cases}
            N, \text{ if } d \text{ divisible by }N \\
            0, \text{ otherwise}
        \end{cases},
    \end{equation}

\noindent
and hence

    \begin{equation}
        Z_N=\frac{1}{(1-q^N)^N}.
    \end{equation}

\noindent
The determinant $\det[u_j^{\kappa_i}]$ vanishes unless reducing each coordinate of 
$\kappa=(\kappa_1,\dots,\kappa_N)$ modulo $N$ leaves a residue vector
whose coordinates are a permutation of $\{0,1,\dots,N-1\}.$ When this is so, $s_\lambda(U_N)$ equals the 
sign of that permutation. Thus, our cyclotomic Schur measure $\P_N$
on $\Y_N$ is supported on partitions $\lambda$ with empty $N$-core, and in fact it has density
$\P_N(\lambda) = (1-q^N)^Nq^{|\lambda|}$ on this set. The product measure $\otimes_{N=1}^\infty \P_N$ may be 
naturally identified with 
the $q^{\text{volume}}$ Gibbs measure on plane partitions \cite{NiNo}.

\section{Schur Measures with CUE Disorder}
    Let $(U_N)_{N=1}^\infty$ be the Circular Unitary Ensemble \cite{Forrester}: $U_N$
    is a random unitary matrix whose distribution in $\group{U}_N$ is Haar measure, $\mathbb{H}_N$. 
    For each $N \in \N$, the corresponding Schur measure $\P_N=\P(q,U_N)$ 
    is now a random probability measure on $\Y_N$ which we call the CUE-Schur measure on $\Y_N$. 
    Equivalently, if we view $\P_N$ as 
    a random measure on $\X_N$ we are looking at a disordered determinantal $N$-point process in $\N_0$ 
    whose correlation kernel depends on the random environment $U_N$ through its eigenvalues $u_1,\dots,u_N$. 

    \subsection{Expectations}
    The partition function and free energy,

        \begin{equation}
            Z_N = \sum_{\lambda \in \Y_N}q^{|\lambda|}|s_\lambda(U_N)|^2
            \quad\text{and}\quad
            \log Z_N = \sum_{d=1}^\infty \frac{q^d}{d}|\Tr U_N^d|^2,
        \end{equation}

    \noindent
    are now random variables. Let $\E$ denote Haar expectation on $\group{U}_N.$
    
        \begin{thm}
        \label{thm:ExpectationValues}
            We have

                \begin{equation*}
                    \E Z_N = \prod_{n=1}^N \frac{1}{1-q^n}
                \end{equation*}

            \noindent
            and 

                \begin{equation*}
                    \E \log Z_N = q\frac{1-q^N}{1-q}+N\sum_{d=N+1}^\infty \frac{q^d}{d}.
                \end{equation*}
        \end{thm}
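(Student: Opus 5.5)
For the partition function, expand $Z_N$ as the Schur series $\sum_{\lambda \in \Y_N} q^{|\lambda|}|s_\lambda(U_N)|^2$ and take the Haar expectation term by term. Every term is nonnegative, so Tonelli justifies the interchange. Schur orthogonality for the irreducible characters $s_\lambda$ of $\group{U}_N$ gives $\E|s_\lambda(U_N)|^2 = 1$ for every $\lambda \in \Y_N$. Hence
\begin{equation*}
\E Z_N = \sum_{\lambda \in \Y_N} q^{|\lambda|},
\end{equation*}
the generating function for partitions with at most $N$ parts. By conjugation this equals the generating function for partitions with all parts at most $N$, which is the standard product $\prod_{n=1}^N (1-q^n)^{-1}$.

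For the free energy, use the Miwa expansion \eqref{eqn:Miwa}. Again every term is nonnegative, so
\begin{equation*}
\E \log Z_N = \sum_{d \geq 1} \frac{q^d}{d}\, \E|\Tr U_N^d|^2 .
\end{equation*}
The needed input is the classical Diaconis--Shahshahani/Rains identity $\E|\Tr U_N^d|^2 = \min(d,N)$. I would prove it with the same character theory as above. By Frobenius, the power sum $p_d$ expands as $\Tr U^d = \sum_{\mu \vdash d} \chi^\mu(d\text{-cycle})\, s_\mu(U)$, and only hooks $\mu=(d-k,1^k)$ contribute, each with coefficient $\pm 1$. Orthogonality then shows that $\E|\Tr U_N^d|^2$ counts the hooks of size $d$ with at most $N$ rows. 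There are $d$ such hooks when $d \le N$ and $N$ when $d > N$.

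Substituting this into the series gives
\begin{equation*}
\E \log Z_N = \sum_{d=1}^N q^d + N\sum_{d>N} \frac{q^d}{d}.
\end{equation*}
The first sum is $q(1-q^N)/(1-q)$, which matches the claimed formula. The only nontrivial step is the identity $\E|\Tr U_N^d|^2=\min(d,N)$, specifically the hook expansion of $p_d$ with the restriction to at most $N$ rows. This is classical and could alternatively be cited directly from Diaconis--Shahshahani.
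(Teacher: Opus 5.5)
Your proposal is correct and follows the paper's proof. Schur orthogonality reduces $\E Z_N$ to the generating function of partitions with at most $N$ parts, and the trace orthogonality $\E|\Tr U_N^d|^2=\min(d,N)$ (which the paper cites from Diaconis--Evans) gives the free energy formula; your hook-expansion derivation of that identity and your Tonelli justification of the termwise expectations are sound additions that the paper leaves implicit.
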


    \begin{proof}
    These evaluations follow from a pair of orthogonality formulas,

        \begin{equation*}
            \E s_\lambda(U_N)s_\mu(U_N^{-1})=\delta_{\lambda\mu}  \quad\text{and}\quad 
            \E\Tr U_N^c \Tr U_N^{-d} = \delta_{cd}\min(d,N),
        \end{equation*}

    \noindent
    due to Schur and Diaconis-Evans \cite{DE}, respectively. Schur orthogonality is 
    a classical result which holds for 
    irreducible characters of arbitrary compact groups \cite{Bump}, whereas Diaconis-Evans orthogonality
    is a special feature of unitary group averages tied to Schur-Weyl duality \cite{Bump}. 
    For the expected partition function, Schur orthogonality gives

            \begin{equation*}
                \E Z_N = \sum_{\lambda \in \Y_N} q^{|\lambda|}\E|s_\lambda(U_N)|^2=\sum_{\lambda \in \Y_N} q^{|\lambda|},
            \end{equation*}

        \noindent 
        the generating function for partitions with at most $N$ parts.
        For the expected free energy, Diaconis-Evans orthogonality gives

            \begin{equation*}
                \E \log Z_N = \sum_{d=1}^\infty \frac{q^d}{d}\E|\Tr U_N^d|^2 = 
                \sum_{d=1}^N q^d+N\sum_{d=N+1}^\infty \frac{q^d}{d},
            \end{equation*}

        \noindent
        a finite geometric series plus a logarithmic tail. 
    \end{proof}

    By Jensen's inequality, we have

        \begin{equation}
        \label{eqn:Jensen}
            \E \log Z_N \leq \log \E Z_N
        \end{equation}

    \noindent
    and the inequality is strict for $N>1.$ The basic disordered systems question is whether this
    gap between quenched and annealed free energy persists in the thermodynamic limit. 

    \begin{thm}
        \label{thm:DisorderGap}
        We have

            \begin{equation*}
                \lim_{N \to \infty} \log \E Z_N=\sum_{n=1}^\infty \frac{q^n}{1-q^n}\frac{1}{n}
                \quad\text{and}\quad
                \lim_{N \to \infty} \E\log Z_N = \frac{q}{1-q}.
            \end{equation*}
    \end{thm}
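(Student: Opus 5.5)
Both limits will be computed directly from the closed forms in Theorem \ref{thm:ExpectationValues}, taking $N\to\infty$ for fixed $q\in(0,1)$.

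\emph{Annealed free energy.} By Theorem \ref{thm:ExpectationValues},
\begin{equation*}
\log \E Z_N = -\sum_{n=1}^N \log(1-q^n) = \sum_{n=1}^N\sum_{k=1}^\infty \frac{q^{nk}}{k}.
\end{equation*}
All terms are nonnegative and the partial sums increase in $N$. By monotone convergence the limit is therefore the full double sum $\sum_{n\ge1}\sum_{k\ge1} q^{nk}/k$; it is finite because $\sum_n -\log(1-q^n)$ converges when $q<1$ (its terms are comparable to $q^n$). Since all terms are nonnegative, Tonelli's theorem lets us swap the order of summation:
\begin{equation*}
\sum_{k=1}^\infty \frac{1}{k}\sum_{n=1}^\infty q^{nk} = \sum_{k=1}^\infty \frac{1}{k}\frac{q^k}{1-q^k}.
\end{equation*}
After renaming $k$ as $n$, this is the stated formula.

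\emph{Quenched free energy.} The first term $q\frac{1-q^N}{1-q}$ tends to $\frac{q}{1-q}$. The remaining task is to show that the tail vanishes, which is the only step needing an estimate. For $d\ge N+1$ we have $N/d<1$, so
\begin{equation*}
0\le N\sum_{d=N+1}^\infty \frac{q^d}{d}\le \sum_{d=N+1}^\infty q^d=\frac{q^{N+1}}{1-q}\to 0.
\end{equation*}
This gives $\lim_{N\to\infty}\E\log Z_N=\frac{q}{1-q}$.

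Comparing the two limits, the $n=1$ term of the annealed limit is exactly $\frac{q}{1-q}$. The remaining terms with $n\ge2$ are strictly positive, so the disorder gap persists in the limit and equals the explicit analytic function $\sum_{n\ge2}\frac{q^n}{n(1-q^n)}$. This is a remark on the result; the proof proper consists of the two limit computations above.
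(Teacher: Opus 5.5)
Your proposal is correct and follows the paper's route: both limits are read off from the closed forms in Theorem \ref{thm:ExpectationValues}, and your tail bound $N\sum_{d>N}q^d/d\le q^{N+1}/(1-q)$ is exactly the paper's estimate on the quenched side. The one difference is on the annealed side: the paper passes to Euler's function $\phi(q)$ and only quotes the expansion of $-\log\phi(q)$, printing the denominator as $1-q$ where it should be $1-q^n$ (the same slip recurs in its displayed disorder gap), whereas your Tonelli interchange actually derives that expansion and states the gap $\sum_{n\ge2}\frac{q^n}{n(1-q^n)}$ correctly.
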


    \begin{proof}
    First note that $Z_N$ and $\log Z_N$ extend to random analytic functions on the unit disc 
    $\D=\{q \in \C \colon |q|<1\}.$ For the expected partition function, Theorem \ref{thm:ExpectationValues} gives

                \begin{equation*}
                    \lim_{N \to \infty} \E Z_N = \lim_{N \to \infty} \prod_{n=1}^N \frac{1}{1-q^n} = \prod_{n=1}^\infty \frac{1}{1-q^n},
                \end{equation*}

    \noindent
    where the convergence is uniform on compact subsets of $\D$. 
    We thus have 

        \begin{equation*}
             \lim_{N \to \infty} \log \E Z_N = - \log \phi(q),
        \end{equation*}

    \noindent
    where $\phi(q)= \prod_{n=1}^\infty (1-q^n)$ is Euler's function \cite{Apostol}
     and 

        \begin{equation}
           - \log \phi(q) = \sum_{n=1}^\infty \frac{q^n}{1-q}\frac{1}{n}.
        \end{equation}

    \noindent
    For the 
    expected free energy, Theorem \ref{thm:ExpectationValues} gives

                \begin{equation*}
                    \sum_{d=1}^N q^d < \E \log Z_N<\sum_{d=1}^N q^d+\frac{q^{N+1}}{1-q},
                \end{equation*}

    \noindent
    so that

        \begin{equation*}
            \lim_{N \to \infty} \E \log Z_N=\frac{q}{1-q}
        \end{equation*}

    \noindent
    uniformly on compact subsets of $\D.$ 
    \end{proof}
    
    We conclude from Theorem \ref{thm:DisorderGap} that the limiting disorder gap 
    for the CUE-Schur ensemble is 

        \begin{equation*}
            \lim_{N \to \infty} (\log \E Z_N - \E \log Z_N)= \sum_{n=2}^\infty \frac{q^n}{1-q}\frac{1}{n}.
        \end{equation*}

    \noindent
    This difference can be explicitly computed for particular
    choices of the fugacity $q \in (0,1)$. Indeed, various special values of $\phi(q)$ were computed explicitly by Ramanujan
    \cite{Berndt}. For example, at fugacity $q=e^{-\pi}$ the limiting disorder gap is 

        \begin{equation}
        \label{eqn:RamanujanValue}
            \frac{7}{8}\log 2+\frac{3}{4}\log \pi-\log\Gamma(\frac{1}{4})-\frac{\pi}{24} -\frac{1}{e^\pi-1}
            \approx 9.63 \times 10^{-4}.
        \end{equation}

    \subsection{Moments of $Z_N$}
    Consider the adjoint representation associated to $(\W_N^\lambda,S_N^\lambda),$ where
    $U \in \group{U}_N$ acts on $A \in \End \W_N^\lambda$ according to $S_N^\lambda(U)AS_N^\lambda(U^{-1}).$
    For any $k \in \N$ and any $\lambda^1,\dots,\lambda^k \in \Y_N$, let $I_N(\lambda^1,\dots,\lambda^k)$
    denote the dimension of the space of $\group{U}_N$-invariants in the corresponding tensor product of 
    adjoint representations,

        \begin{equation}
        \label{eqn:AdjointProduct}
            \End \W_N^{\lambda^1} \otimes \dots \otimes \End \W_N^{\lambda^k}.
        \end{equation}

        \begin{thm}
            \label{thm:PartitionFunctionMoments}
            We have

                \begin{equation*}
                    \E Z_N^k = \sum_{\lambda^1,\dots,\lambda^k \in \Y_N}q^{|\lambda^1|+\dots+|\lambda^k|}
            I_N(\lambda^1,\dots,\lambda^k).
                \end{equation*}
        \end{thm}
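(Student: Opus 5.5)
The plan is to expand $Z_N^k$ as a $k$-fold sum of products of characters and then compute the Haar expectation of each term through the character of the adjoint representation. First, write
\begin{equation*}
Z_N^k=\sum_{\lambda^1,\dots,\lambda^k\in\Y_N} q^{|\lambda^1|+\dots+|\lambda^k|}\prod_{i=1}^k s_{\lambda^i}(U_N)s_{\lambda^i}(U_N^{-1}).
\end{equation*}
For $U_N$ unitary, $|s_\lambda(U_N)|\le s_\lambda(I)=\dim \W_N^\lambda$, so every term is bounded by the corresponding term of $\bigl(\sum_\lambda q^{|\lambda|}(\dim\W_N^\lambda)^2\bigr)^k$. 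Taking $U=I$ in Cauchy's identity \eqref{eqn:Cauchy} shows this sum equals $(1-q)^{-N^2k}<\infty$ for $q\in(0,1)$. Since all terms are nonnegative in any case, Tonelli (or dominated convergence) lets us interchange $\E$ with the sum.

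Next, I will identify each summand's expectation. The character of the adjoint representation on $\End\W_N^\lambda\cong \W_N^\lambda\otimes(\W_N^\lambda)^*$ is $s_\lambda(U)\overline{s_\lambda(U)}=s_\lambda(U)s_\lambda(U^{-1})$. Characters are multiplicative under tensor products, so the character of the product \eqref{eqn:AdjointProduct} is $\prod_{i=1}^k s_{\lambda^i}(U)s_{\lambda^i}(U^{-1})$.

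Finally, I will use the standard fact that for any finite-dimensional representation $V$ of a compact group, the Haar average of its character equals $\dim V^{\group{U}_N}$. The reason is that $\int \rho(U)\,d\mathbb{H}_N(U)$ is the orthogonal projection onto the invariants, so its trace is their dimension; equivalently, this follows from Schur orthogonality against the trivial character, which is already used in the proof of Theorem \ref{thm:ExpectationValues}. This gives $\E\prod_i |s_{\lambda^i}(U_N)|^2=I_N(\lambda^1,\dots,\lambda^k)$, and substituting into the expanded sum yields the formula.

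There is no real obstacle here. The only point needing care is justifying the interchange of expectation and infinite summation, and that is handled by the nonnegativity of the terms.
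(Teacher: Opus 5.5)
Your proposal is correct and takes the same approach as the paper. You expand $Z_N^k$, recognize $\prod_i s_{\lambda^i}(U)s_{\lambda^i}(U^{-1})$ as the character of the tensor product of adjoint representations, and use that the Haar average of a character is the dimension of the invariant subspace; your Tonelli justification for exchanging $\E$ with the infinite sum is a detail the paper leaves implicit.
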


        \begin{proof}
            The $k$th power of the partition function is

                \begin{equation*}
                    Z_N^k = \sum_{\lambda^1,\dots,\lambda^k \in \Y_N}q^{|\lambda^1|+\dots+|\lambda^k|}
                    |s_{\lambda^1}(U_N)|^2 \dots |s_{\lambda^k}(U_N)|^2.
                \end{equation*}

            \noindent
            For any $U \in \U_N$, the product 

                \begin{equation*}
                    |s_{\lambda^1}(U)|^2 \dots |s_{\lambda^k}(U)|^2=
                    s_{\lambda^1}(U)s_{\lambda^1}(U^{-1}) \dots s_{\lambda^k}(U)s_{\lambda^k}(U^{-1})
                \end{equation*}

            \noindent
            is the character of $U$ acting in the representation \eqref{eqn:AdjointProduct}.
            Since averaging the character of a representation gives the dimension of the space of 
            invariants in that representation \cite{Bump}, the result follows.
        \end{proof}

    \subsection{Limit of $\log Z_N$}
    We will now determine the distribution of the free energy
    in the thermodynamic limit.

        \begin{thm}
        \label{thm:LimitDensity}
        For any $q \in (0,1)$, we have 

            \begin{equation*}
                \log Z_N \implies \sum_{d=1}^\infty q^d X_d
            \end{equation*}

        \noindent
        as $N \to \infty$, 
        where $X_1,X_2,X_3,\dots$ are iid $\mathrm{Exp}(1)$
        random variables.
        \end{thm}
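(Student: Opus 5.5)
Since $\log Z_N = \sum_{d\ge 1} \frac{q^d}{d}|\Tr U_N^d|^2$, the natural input is the Diaconis--Shahshahani/Diaconis--Evans theorem: for each fixed $m$, the vector $(\Tr U_N, \Tr U_N^2,\dots,\Tr U_N^m)$ converges in distribution to $(\sqrt{1}\,Z_1,\sqrt{2}\,Z_2,\dots,\sqrt{m}\,Z_m)$, where the $Z_d$ are iid standard complex Gaussians with $\E|Z_d|^2=1$. This follows by the method of moments. Diaconis--Evans give the exact identity
\[
\E \prod_{d} (\Tr U_N^d)^{a_d}\overline{(\Tr U_N^d)}^{b_d} = \delta_{ab}\prod_d d^{a_d}a_d!
\]
whenever $\sum_d d a_d \le N$, and these are precisely the Gaussian moments. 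Consequently $|\Tr U_N^d|^2/d \Rightarrow |Z_d|^2$. Since $|Z_d|^2$ is $\mathrm{Exp}(1)$, the finite-dimensional limits are iid exponentials $X_d$.

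The plan is as follows. Split
\[
\log Z_N = \sum_{d=1}^m q^d \frac{|\Tr U_N^d|^2}{d} + R_{N,m}.
\]
By the continuous mapping theorem, the first sum converges in distribution to $\sum_{d\le m} q^d X_d$ for each fixed $m$. For the tail, use Diaconis--Evans orthogonality as in the proof of Theorem~\ref{thm:ExpectationValues}. This gives
\[
\E R_{N,m} = \sum_{d>m} \frac{q^d}{d}\min(d,N) \le \sum_{d>m} q^d = \frac{q^{m+1}}{1-q},
\]
uniformly in $N$. Since $R_{N,m}\ge 0$, Markov's inequality yields
\[
\P(R_{N,m}>\epsilon)\le q^{m+1}/((1-q)\epsilon).
\]
Similarly, $\sum_{d>m} q^dX_d \to 0$ in $L^1$ as $m\to\infty$, and the full series converges almost surely.

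We then conclude with the standard approximation lemma (Billingsley, Theorem 3.2). Suppose $Y_{N,m}\Rightarrow Y_m$ as $N\to\infty$, $Y_m\Rightarrow Y$ as $m\to\infty$, and $\lim_m\limsup_N \P(|\log Z_N - Y_{N,m}|>\epsilon)=0$ for every $\epsilon>0$. Then $\log Z_N\Rightarrow Y = \sum_d q^dX_d$. Alternatively, one can argue via characteristic functions, bounding $|\E e^{it\log Z_N} - \E e^{itY_{N,m}}|\le |t|\,\E R_{N,m}$.

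The only substantive step is the joint convergence of the traces. This would be the main obstacle if we had to prove it from scratch, but the moment identity of Diaconis--Evans (already invoked in the paper) settles it. The Gaussian limit is determined by its moments, so the method of moments applies to the real and imaginary parts. The remaining steps are routine tail control, made easy by the uniform-in-$N$ bound $\E|\Tr U_N^d|^2\le d$.
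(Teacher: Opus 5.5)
Your proposal is correct and follows essentially the same route as the paper: truncate at a fixed $m$, use the joint Gaussian limit of $(\Tr U_N,\dots,\Tr U_N^m)$ to get $\sum_{d\le m} q^d X_d$, and control the tail uniformly in $N$ via the Diaconis--Evans bound $\E|\Tr U_N^d|^2=\min(d,N)\le d$ together with Markov's inequality. The differences are cosmetic: you derive the joint trace convergence from the Diaconis--Evans moment identity by the method of moments, where the paper cites Johansson's multivariate CLT, and you state explicitly the approximation step (Billingsley's Theorem 3.2, or the characteristic-function bound $|t|\,\E R_{N,m}$) that the paper leaves implicit.
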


        \begin{proof}
        We use Johansson's multivariate central limit theorem for CUE traces \cite{Johansson}: 
        for any fixed $M \in \N$ we have 
        
                \begin{equation*}
                    (\Tr U_N,\Tr U_N^2,\dots,\Tr U_N^M) \implies
                    (\sqrt{1}Z_1,\sqrt{2}Z_2,\dots,\sqrt{M}Z_M)
                \end{equation*}

        \noindent
        as $N \to \infty,$ where the convergence is in distribution and
        $Z_1,\dots,Z_M$ are iid standard complex Gaussians. Since the law of
        $|Z_1|^2$ is $\mathrm{Exp}(1)$, Johansson's theorem
        implies that for any fixed $M \in \N$ the truncated free energy

                \begin{equation*}
                    L_{MN}= \sum_{d=1}^M \frac{q^d}{d}|\Tr U_N^d|^2 
                \end{equation*}

            \noindent
            satisfies

                \begin{equation*}
                    L_{MN} \implies \sum_{d=1}^M q^d X_d
                \end{equation*}

            \noindent
            as $N \to \infty$, where $X_1,\dots,X_M$ are iid $\mathrm{Exp}(1)$.
            To conclude $N \to \infty$ convergence in distribution of the full free energy $\log Z_N$ 
            to $F=\sum_{d=1}^\infty q^d X_d$ from this truncation, we must show that no mass
            escapes into the tail sum

                \begin{equation*}
                    T_{MN} = \sum_{d=M+1}^\infty \frac{q^d}{d}|\Tr U_N^d|^2
                \end{equation*}
                    
        \noindent
        as $N \to \infty$. By Diaconis-Evans we have

            \begin{equation*}
              \E T_{MN}  = \sum_{d=M+1}^\infty \frac{q^d}{d}\min(d,N) < \sum_{d=M+1}^\infty q^d, 
            \end{equation*}

        \noindent
       so Markov's inequality gives

            \begin{equation*}
                \sup_{N \in \N}\mathbb{H}_N(T_{MN}>\varepsilon) \leq \frac{1}{\varepsilon}\frac{q^{M+1}}{1-q}
            \end{equation*}

        \noindent
        for any $\varepsilon > 0,$ where once again $\mathbb{H}_N$ denotes the Haar probability measure on $\U_N.$
        \end{proof}

    Theorem \ref{thm:LimitDensity} can be sharpened to a finite $N$ bound on the distance
    between the distributions of $\log Z_N$ and $F=\sum_{d=1}^\infty q^dX_d$ using results of Johansson and 
    Lambert \cite{JohLam}, but this is not necessary for our purposes.

    \section{Extensive Scaling}
    \label{sec:Extensive}
    We now consider a scaling regime for CUE-Schur measure 
    in which the free energy becomes extensive and the free energy density self-averages.
    Throughout this section, $U_N$ denotes a random unitary matrix whose distribution in 
    the unitary group $\group{U}_N$ is Haar measure $\H_N$, and $c>0$ is an arbitrary but 
    fixed positive constant. 

    \subsection{Near-critical scaling}
    The free energy $\log Z_N = \sum_{d=1}^\infty \frac{q^d}{d}|\Tr U_N^d|^2$ converges 
    for any fugacity $q \in (0,1)$, but diverges at $q=1.$ We shall consider the large $N$
    asymptotics of $\log Z_N$ when the fugacity $q=q_N$ depends critically on the particle number 
    $N$, meaning that $q_N \to 1$ as $N \to \infty$. We implement this scaling as $q_N=e^{-\frac{c}{N}}.$ 
    In this critical scaling, 
    the quenched and annealed free energies of the CUE-Schur measure $\P_N=\P_N(e^{-\frac{c}{N}},U_N)$ 
    are asymptotically proportional to the particle number.

        \begin{thm}
        \label{thm:ExtensiveExpectation}
            The near-critical quenched and annealed free energies satisfy 

                \begin{equation*}
                    \E \log Z_N \sim \mu_c N
                    \quad\text{and}\quad
                    \log \E Z_N \sim \nu_c N
                \end{equation*}

            \noindent
            as $N \to \infty,$ where 

                \begin{equation*}
                \mu_c= \frac{1}{c}\int_0^c e^{-x}\mathrm{dx} + \int_c^\infty \frac{e^{-x}}{x}\mathrm{d}x
                \quad\text{and}\quad
                    \nu_c = \frac{1}{c}\int_0^c \log \frac{1}{1-e^{-x}}\mathrm{d}x
                \end{equation*}

        \end{thm}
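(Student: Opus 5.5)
We start from the exact formulas of Theorem \ref{thm:ExpectationValues} and substitute $q=q_N=e^{-c/N}$; both asymptotics then reduce to Riemann sums. For the quenched free energy,
\begin{equation*}
\E \log Z_N = \sum_{d=1}^N e^{-cd/N} + N\sum_{d=N+1}^\infty \frac{e^{-cd/N}}{d}.
\end{equation*}
For the first sum we write $\frac{1}{N}\sum_{d=1}^N e^{-c d/N}$. This is a Riemann sum for $\int_0^1 e^{-cy}\,\mathrm{d}y=\frac{1}{c}\int_0^c e^{-x}\,\mathrm{d}x$. Because the integrand is monotone, the error is $O(1/N)$, and so this piece equals $N\cdot\frac1c\int_0^c e^{-x}\mathrm{d}x+O(1)$.

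For the tail we set $x=cd/N$ and write
\begin{equation*}
N\sum_{d>N}\frac{e^{-cd/N}}{d}=N\cdot\frac{c}{N}\sum_{d>N}\frac{e^{-cd/N}}{cd/N}.
\end{equation*}
This is $N$ times a Riemann sum with mesh $c/N$ for $\int_c^\infty \frac{e^{-x}}{x}\,\mathrm{d}x$. The function $e^{-x}/x$ is decreasing and integrable on $[c,\infty)$. The sum is therefore squeezed between $\int_{c}^\infty$ and $\int_{c+c/N}^\infty$ of the integrand, up to the mesh factor. It follows that the tail is $N\int_c^\infty e^{-x}x^{-1}\mathrm{d}x+O(1)$. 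Adding the two pieces gives $\E\log Z_N=\mu_c N+O(1)$.

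For the annealed free energy,
\begin{equation*}
\log \E Z_N=\sum_{n=1}^N \log\frac{1}{1-e^{-cn/N}}=N\cdot\frac{1}{N}\sum_{n=1}^N f(n/N),
\end{equation*}
where $f(y)=-\log(1-e^{-cy})$. Here $f$ is positive and decreasing on $(0,1]$, but it has a logarithmic singularity at $y=0$: $f(y)\sim -\log(cy)$. This singularity is the main (mild) obstacle, since the standard Riemann sum error estimate does not apply directly.

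We handle it with the monotonicity comparison for decreasing functions,
\begin{equation*}
\int_{1/N}^{1+1/N} f\le \frac1N\sum_{n=1}^N f(n/N)\le \int_0^1 f .
\end{equation*}
The singularity is integrable, so $\int_0^1 f<\infty$, and the lower bound differs from $\int_0^1 f$ by $\int_0^{1/N} f-\int_1^{1+1/N}f=O(\frac{\log N}{N})$. Hence
\begin{equation*}
\frac1N\log\E Z_N\to\int_0^1 f(y)\,\mathrm{d}y=\frac1c\int_0^c\log\frac{1}{1-e^{-x}}\,\mathrm{d}x=\nu_c .
\end{equation*}
This gives $\log \E Z_N=\nu_c N+O(\log N)$.

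Both limits $\mu_c$ and $\nu_c$ are strictly positive and finite, so these expansions yield the claimed asymptotic equivalences $\E\log Z_N\sim\mu_c N$ and $\log\E Z_N\sim\nu_c N$.
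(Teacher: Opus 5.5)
Your proposal is correct and follows the same route as the paper: substitute $q_N=e^{-c/N}$ into the exact formulas of Theorem \ref{thm:ExpectationValues} and compare each piece with a Riemann sum, using monotonicity of the integrand (including near the logarithmic singularity at $0$ for the annealed term). The only differences are cosmetic. You treat the geometric piece as a Riemann sum rather than summing it in closed form, and you make the paper's ``straightforward calculus'' explicit via monotone squeezes, which also gives error terms $O(1)$ and $O(\log N)$.
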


        \begin{proof}
            Since we have access to the exact formulas of Theorem \ref{thm:ExpectationValues} 
            this is computational. 
            
            For any fixed $q \in (0,1)$ and finite $N \in \N$, 
            the quenched free energy per particle is

                \begin{equation*}
                    \frac{1}{N}\E \log Z_N= \frac{q}{N}\frac{1-q^N}{1-q}+\sum_{d=N+1}^\infty 
                    \frac{q^d}{d}.
                \end{equation*}

            \noindent
            Replacing $q$ with $q_N=e^{-\frac{c}{N}}$ in the finite geometric series component gives

                \begin{equation*}
                    \frac{e^{-\frac{c}{N}}}{N}\frac{1-e^{-c}}{1-e^{-\frac{c}{N}}}=
                    \frac{1-e^{-c}}{\frac{c}{1!}+\frac{c^2}{2!N}+\frac{c^3}{3!N^2}+\dots},
                \end{equation*}

            \noindent
            which converges to 

                \begin{equation*}
                    \frac{1-e^{-c}}{c}=\frac{1}{c}\int_0^c e^{-x}\mathrm{dx}
                \end{equation*}

            \noindent
            as $N \to \infty.$ Replacing $q$ with $q_N=e^{-\frac{c}{N}}$ in the logarithmic tail component
            gives

                \begin{equation}
                    \sum_{d=N+1}^\infty \frac{e^{-\frac{cd}{N}}}{d} = \sum_{d=N+1}^\infty\frac{1}{N}
                    h\left(\frac{d}{N}\right), \quad h(x)=\frac{e^{-cx}}{x},               
                \end{equation}

            \noindent
            which intuitively is a Riemann sum for the improper integral 

                \begin{equation}
                    \int_1^\infty \frac{e^{-cx}}{x}\mathrm{d}x = \int_c^\infty \frac{e^{-x}}{x}\mathrm{d}x.
                \end{equation}

            \noindent
            Using the fact that $h(x)$ is a continuous, positive, decreasing function on $[1,\infty),$
            it is straightforward calculus to check that indeed

                \begin{equation*}
                    \lim_{N \to \infty} \sum_{d=N+1}^\infty \frac{e^{-\frac{cd}{N}}}{d} = \int_1^\infty \frac{e^{-cx}}{x}\mathrm{d}x.
                \end{equation*}
            
        The argument is similar for the annealed free energy density: replacing $q$ with 
        $q_N=e^{-\frac{c}{N}}$ in the exact formula
        
            \begin{equation*}
                \frac{1}{N} \log \E Z_N = -\frac{1}{N} \sum_{n=1}^N \log(1-q^n).
            \end{equation*}

        \noindent
        yields 

            \begin{equation*}
                \sum_{n=1}^N \frac{1}{N}\log\frac{1}{1-e^{-\frac{cn}{N}}} = \sum_{n=1}^N 
                \frac{1}{N}g\left(\frac{n}{N}\right), \quad g(x)=-\log(1-e^{-cx}).
            \end{equation*}

        \noindent
        This is a Riemann sum for the integral 

            \begin{equation}
                -\int_0^1 \log(1-e^{-cx}) \mathrm{d}x = -\frac{1}{c}\int_0^c \log(1-e^{-x}) \mathrm{d}x.
            \end{equation}

        \noindent
        Using the fact that $g(x)$ is a continuous, positive, decreasing function on $(0,1],$
        and taking suitable care with the logarithmic singularity at $x=0,$ it is again 
        calculus to verify

            \begin{equation*}
                 \lim_{N \to \infty}\sum_{n=1}^N \frac{1}{N}\log\frac{1}{1-e^{-\frac{cn}{N}}} = 
                 \int_0^1 \log\frac{1}{1-e^{-cx}}\mathrm{d}x.
            \end{equation*}
        \end{proof}
        
    \subsection{Variance of $\log Z_N$}
    Our goal now is to show that the free energy density  concentrates around its mean in our near-critical
    scaling, so that we have self-averaging for near-critical CUE-Schur measures in the thermodynamic limit. 
    We first take a step back and show that $\log Z_N$ is a pair statistic of the 
    CUE.

        \begin{thm}
            \label{thm:PairStatistic}
            At fixed fugacity $q \in (0,1)$ we have

                \begin{equation*}
                    \log Z_N = \frac{1}{2} \sum_{m,n=1}^N f(\theta_m-\theta_n),
                \end{equation*}

            \noindent
            where $e^{i\theta_1},\dots,e^{i\theta_N}$ are the eigenvalues of $U_N$ and 

                \begin{equation*}
                    f(\theta) = - \log(1-2q\cos\theta+q^2)
                \end{equation*}

            \noindent
            is an even periodic function with Fourier series

                \begin{equation*}
                    f(\theta) = 2\sum_{k=1}^\infty \frac{q^k}{k}\cos(k\theta).
                \end{equation*}
        \end{thm}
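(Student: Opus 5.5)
I will derive the statement directly from Cauchy's identity \eqref{eqn:Cauchy}, which at fixed $q \in (0,1)$ gives
\begin{equation*}
\log Z_N = -\sum_{m,n=1}^N \log\bigl(1-q e^{i(\theta_m-\theta_n)}\bigr),
\end{equation*}
where we use the principal branch of the logarithm. Since $|q e^{i\theta}|<1$, each term has the convergent expansion $-\log(1-qe^{i\theta}) = \sum_{k\ge 1} \frac{q^k}{k}e^{ik\theta}$. This is exactly the expansion that produced \eqref{eqn:Miwa}.

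The key step is to symmetrize over the pair $(m,n)$. The double sum is invariant under swapping $m$ and $n$, which replaces $\theta_m-\theta_n$ by its negative. Writing each term as the average of the $(m,n)$ and $(n,m)$ contributions therefore gives
\begin{equation*}
\log Z_N = \frac12\sum_{m,n=1}^N \Bigl[-\log\bigl(1-qe^{i(\theta_m-\theta_n)}\bigr)-\log\bigl(1-qe^{-i(\theta_m-\theta_n)}\bigr)\Bigr].
\end{equation*}
The bracketed sum of logarithms is the logarithm of the product
\begin{equation*}
(1-qe^{i\theta})(1-qe^{-i\theta}) = 1-2q\cos\theta+q^2 = |1-qe^{i\theta}|^2 .
\end{equation*}
This product is strictly positive, because it is at least $(1-q)^2>0$. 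The only branch issue is whether $\log(ab)=\log a+\log b$ holds here. It does: $\operatorname{Re}(1-qe^{i\theta})>0$, so the two principal logarithms are complex conjugates. Their sum is therefore $2\log|1-qe^{i\theta}|$, which equals $\log(1-2q\cos\theta+q^2)$. This yields $\log Z_N=\frac12\sum_{m,n} f(\theta_m-\theta_n)$.

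For the Fourier series, I add the two expansions $\sum_k \frac{q^k}{k}e^{ik\theta}$ and $\sum_k \frac{q^k}{k}e^{-ik\theta}$ termwise, which gives $f(\theta)=2\sum_{k\ge1}\frac{q^k}{k}\cos(k\theta)$. Absolute convergence, uniform in $\theta$ since the terms are dominated by $q^k/k$, justifies the rearrangement. The series also makes it evident that $f$ is even and $2\pi$-periodic.

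As a consistency check against \eqref{eqn:Miwa}, substituting the series and summing over $m,n$ gives
\begin{equation*}
\sum_{m,n}\cos\bigl(k(\theta_m-\theta_n)\bigr)=\operatorname{Re}\,|\Tr U_N^k|^2=|\Tr U_N^k|^2 .
\end{equation*}
This recovers $\log Z_N=\sum_k \frac{q^k}{k}|\Tr U_N^k|^2$ exactly and could serve as an alternative proof. The only step needing care is the branch of the logarithm, which the positivity of $\operatorname{Re}(1-qe^{i\theta})$ settles.
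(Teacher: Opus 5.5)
Your proof is correct and follows essentially the same route as the paper: write $\log Z_N=-\sum_{m,n}\log(1-qe^{i(\theta_m-\theta_n)})$ via the Cauchy/Miwa expansion, pair the $(m,n)$ and $(n,m)$ terms to obtain $-\log(1-2q\cos(\theta_m-\theta_n)+q^2)$, and get the Fourier series by adding the two logarithmic series. Two small refinements over the paper are welcome: symmetrizing the full double sum absorbs the diagonal terms that the paper handles separately via $f(0)=-2\log(1-q)$, and you check the branch of the logarithm explicitly.
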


        \begin{proof}
            From \eqref{eqn:Miwa}, we have 

            \begin{equation*}
                \log Z_N = \sum_{d=1}^\infty \frac{q^d}{d}\sum_{m,n=1}^N e^{id(\theta_m-\theta_n)}=
                -\sum_{m,n=1}^N \log (1-qe^{i(\theta_m-\theta_n)}).
            \end{equation*}

        \noindent 
        Each of the $N$ diagonal terms in the double sum contributes the constant $-\log (1-q),$
        and pairing off-diagonal terms gives

            \begin{equation*}
                -\log(1-qe^{i(\theta_m-\theta_n)}) - \log(1-qe^{-i(\theta_m-\theta_n)})
                =-\log (1-2q\cos(\theta_m-\theta_n) +q^2).
            \end{equation*}

        \noindent
        Noting that the function

            \begin{equation*}
                f(\theta)= -\log (1-2q\cos \theta+q^2)
            \end{equation*}

        \noindent
        satisfies $f(0) = -2 \log(1-q)$, we thus have

             \begin{equation*}
                \log Z_N =\frac{1}{2}\sum_{m,n=1}^N f(\theta_m-\theta_n).
            \end{equation*}

        \noindent
        Concerning the Fourier series of $f(\theta),$ we have

            \begin{equation*}
                f(\theta)= \log \frac{1}{1-qe^{-i\theta}} + \log \frac{1}{1-qe^{i\theta}}
                = \sum_{k=1}^\infty \frac{q^{k}}{k}(e^{ik\theta}+e^{-ik\theta})
                =2\sum_{k=1}^\infty \frac{q^k}{k} \cos(k\theta).
            \end{equation*}

    \end{proof}

    Although Theorem \ref{thm:PairStatistic} is an elementary rewriting of the trace expansion 
    of $\log Z_N$ coming from the Cauchy identity, the fact that the free energy of CUE-Schur measure
    is a pair statistic of the CUE is actually quite important. It allows us to connect with 
    work of Soshnikov and Wu \cite{SoWu}, whose studied general pair statistics 

        \begin{equation}
            S_N(f) = \sum_{m,n=1}^N f(\theta_m-\theta_n)
        \end{equation}

    \noindent
    of CUE eigenvalues and obtained the following crucial variance formula. 

    \begin{thm}
        \label{thm:PairVarianceFormula}
        For real even test functions

            \begin{equation*}
                f(\theta) = \widehat{f}(0)+2\sum_{k=1}^\infty \widehat{f}(k)\cos(k\theta),
            \end{equation*}

        \noindent
        we have

             \begin{equation*}
                        \frac{1}{4}\mathrm{Var}[S_N(f)]= A_N(f) + B_N(f) -C_N(f)-D_N(f),
                    \end{equation*}

                \noindent
                where 

                    \begin{equation*}
                        \begin{split}
                            A_N(f) &= \sum_{k=1}^{N-1} k^2 \widehat{f}(k)^2\\
                            B_N(f) & = (N^2-N)\sum_{k=N}^\infty \widehat{f}(k)^2 \\
                            C_N(f) & = \sum_{\substack{\max(k,l) \geq N \\ 1 \leq |k-l| \leq N-1}} (N-|k-l|)\widehat{f}(k)\widehat{f}(l)\\
                            D_N(f) &= \sum_{\substack{1 \leq k,l \leq N-1 \\ k+l \geq N+1}} (k+l-N)\widehat{f}(k)\widehat{f}(l).
                        \end{split} 
                    \end{equation*}
    \end{thm}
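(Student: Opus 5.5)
We reduce $S_N(f)$ to a linear combination of the variables $|\Tr U_N^k|^2$ and then compute their covariances using Haar moment formulas.

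Write $p_k=\Tr U_N^k$. Expanding $f(\theta_m-\theta_n)$ in its Fourier series and summing over $m,n$ gives
\begin{equation*}
S_N(f)=N^2\widehat{f}(0)+2\sum_{k=1}^\infty \widehat{f}(k)|p_k|^2 .
\end{equation*}
(Here we assume enough decay of $\widehat{f}$ for the interchange of sums and the variance to be finite; for the application to Theorem \ref{thm:PairStatistic} the coefficients decay geometrically.) Since the constant term does not affect the variance,
\begin{equation*}
\frac14\mathrm{Var}[S_N(f)]=\sum_{k,l\geq 1}\widehat{f}(k)\widehat{f}(l)\,\mathrm{Cov}(|p_k|^2,|p_l|^2).
\end{equation*}
The task is therefore to evaluate $\E|p_k|^2|p_l|^2$ exactly for all $k,l$ and $N$. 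Diaconis--Evans orthogonality gives $\E|p_k|^2=\min(k,N)$.

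To compute the fourth moment, we expand $p_kp_l$ in the Schur basis using the Murnaghan--Nakayama rule, $p_kp_l=\sum_\lambda \chi^\lambda(k,l)s_\lambda$ with $|\lambda|=k+l$. Schur orthogonality from the proof of Theorem \ref{thm:ExpectationValues} then gives
\begin{equation*}
\E|p_kp_l|^2=\sum_{\lambda\in\Y_N,\ |\lambda|=k+l}\chi^\lambda(k,l)^2 .
\end{equation*}
This character is nonzero only when $\lambda$ is obtained by adding a $k$-rim hook to a hook shape of size $l$ (or the reverse). We count these shapes with at most $N$ rows.

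\textbf{Diagonal case $k=l$.} The full sum over all $\lambda$ is $z_{(k,k)}=2k^2$. Removing the shapes with more than $N$ rows should give $\E|p_k|^4=2k^2$ for $2k\le N$, with a piecewise-linear correction in $N$ once $2k>N$. The variance is then $\E|p_k|^4-\min(k,N)^2$.

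\textbf{Off-diagonal case $k\neq l$.} The unrestricted sum is $kl$, which exactly cancels the product of means when $k,l\le N$ and $k+l\le N$, so the covariance vanishes in that range.

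The nonzero covariances arise in three regimes:
\begin{itemize}
\item both indices below $N$ with $k+l\ge N+1$, which yields the $D_N$ term;
\item at least one index $\ge N$ with $|k-l|\le N-1$, which yields the $C_N$ term;
\item the diagonal entries, which combine into $A_N$ (for $k\le N-1$, variance $k^2$) and $B_N$ (for $k\ge N$, variance $N^2-N$).
\end{itemize}
The diagonal values are consistent with known results: the variance of $|p_k|^2$ is $k^2$ for small $k$, and for $k\ge N$ one has $\E|p_k|^4=2N^2-N$ (since $U^k$ then has asymptotically independent-like eigenvalues), giving $2N^2-N-N^2=N^2-N$, which matches $B_N$.

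An alternative route, probably the cleaner one, is to use the determinantal structure of CUE with the sine-type kernel $K_N(\theta)=\sum_{j=0}^{N-1}e^{ij\theta}$. We would express the variance of the pair statistic through the 2-, 3- and 4-point correlation functions. In Fourier space each such correlation becomes a lattice-point count of indices in $\{0,\dots,N-1\}$ satisfying linear constraints. Such counts produce exactly the weights $(N-|k-l|)$ and $(k+l-N)$, as overlaps of intervals of length $N$ shifted by $k$ and $l$.

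The main obstacle is this bookkeeping: the cumulant expansion of the four-point function, with its sign-alternating cyclic products of kernels, has to collapse to the four stated sums with no leftover boundary terms. I would first verify the formula for $f=2\widehat f(k)\cos k\theta$ (which gives $A_N$ or $B_N$) and for a two-mode $f$ (which isolates the $C_N$ and $D_N$ cross terms). Then I would assemble the general case by bilinearity in $\widehat f$, since the variance is a quadratic form in the coefficients and is fully determined by the single- and two-mode cases.
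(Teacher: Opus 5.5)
Your reduction is correct. Since $S_N(f)=N^2\widehat f(0)+2\sum_{k\ge1}\widehat f(k)|p_k|^2$, the quantity $\frac14\mathrm{Var}[S_N(f)]$ equals $\sum_{k,l\ge1}\widehat f(k)\widehat f(l)\,\mathrm{Cov}(|p_k|^2,|p_l|^2)$. The theorem is therefore equivalent to the exact identity
\[
\mathrm{Cov}\bigl(|p_k|^2,|p_l|^2\bigr)=\delta_{kl}\min(k,N)^2+2\bigl(N-\max(k,l)\bigr)_+-(N-k-l)_+-\bigl(N-|k-l|\bigr)_+ \qquad (k,l\ge 1),
\]
which reproduces $A_N+B_N-C_N-D_N$ on checking the cases. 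The paper does not prove the statement; it attributes it to Soshnikov--Wu. Your proposal establishes only the easy cases: vanishing covariance for $k\neq l$, $k+l\le N$ via $z_{(k,l)}=kl$, and the diagonal values for $2k\le N$ and $k\ge N$. For the rest, it reads the index regions of $C_N$ and $D_N$ off the statement and asserts that the remaining covariances ``yield'' them. Those covariances, for $k+l\ge N+1$, are the entire content of the theorem. Neither your Murnaghan--Nakayama count nor your lattice count is carried out, as you acknowledge. The closing ``single- and two-mode'' step restates this missing computation rather than reducing it.

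Your third bullet shows the bookkeeping has not been done. The variance of $|p_k|^2$ is not $k^2$ for all $k\le N-1$. For $N/2<k\le N-1$ one has $\E|p_k|^4=2k^2-2k+N$, so the variance is $k^2-(2k-N)$. For example, with $N=3$, $k=2$, dropping $\lambda=(1^4)$ from $z_{(2,2)}=8$ gives $\E|p_2|^4=7$ and variance $3$, not $4$. The correction $-(2k-N)$ is exactly the diagonal part of $D_N$, whose index set allows $k=l$.

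The value $\E|p_k|^4=2N^2-N$ for $k\ge N$ holds exactly, because the eigenvalues of $U_N^k$ are then iid uniform (Rains). An ``asymptotically independent-like'' heuristic cannot give a finite-$N$ identity.

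Your determinantal route does close the gap in about a page. Since $\E p_k=0$ and $\E p_k\bar p_l=\delta_{kl}\min(k,N)$, one has $\mathrm{Cov}(|p_k|^2,|p_l|^2)=\delta_{kl}\min(k,N)^2+\kappa_4(p_k,\bar p_k,p_l,\bar p_l)$. The joint cumulant is
\[
\kappa_4=\sum_{m=1}^4\frac{(-1)^{m-1}}{m}\sum_{(B_1,\dots,B_m)}\Tr\bigl(PM_{a_{B_1}}PM_{a_{B_2}}\cdots PM_{a_{B_m}}\bigr).
\]
Here the inner sum runs over ordered set partitions of the exponents $(k,-k,l,-l)$, $a_B=\sum_{i\in B}a_i$, $P$ projects onto frequencies $0,\dots,N-1$, and $M_a$ is multiplication by $e^{ia\theta}$. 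Each trace equals $(N-\text{spread of the partial sums})_+$. Collecting the $m=1,2,3,4$ terms gives $\kappa_4=2(N-\max(k,l))_+-(N-k-l)_+-(N-|k-l|)_+$, which is the identity above.
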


    Combining Theorems \ref{thm:PairStatistic} and \ref{thm:PairVarianceFormula}, we can evaluate
    the variance of the free energy for CUE-Schur measures as follows 
        
         \begin{cor}
                \label{cor:VarianceFormula}
                For any fixed $q \in (0,1)$ and finite $N \in \N,$ we have

                    \begin{equation*}
                        \mathrm{Var}[\log Z_N] = A_N+B_N-C_N-D_N,
                    \end{equation*}

                \noindent
                where

                    \begin{equation*}
                        \begin{split}
                            A_N &= \sum_{k=1}^{N-1} q^{2k}\\
                            B_N & = N(N-1)\sum_{k=N}^\infty \frac{q^{2k}}{k^2} \\
                            C_N & = \sum_{\substack{\max(k,l) \geq N \\ 1 \leq |k-l| \leq N-1}} (N-|k-l|)\frac{q^{k+l}}{kl} \\
                            D_N &= \sum_{\substack{1 \leq k,l \leq N-1 \\ k+l \geq N+1}} (k+l-N)\frac{q^{k+l}}{kl}.
                        \end{split} 
                    \end{equation*}
            \end{cor}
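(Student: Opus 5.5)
The corollary should follow by substitution. By Theorem \ref{thm:PairStatistic}, $\log Z_N = \frac{1}{2} S_N(f)$ with $f(\theta) = -\log(1-2q\cos\theta+q^2)$. This $f$ is real, even and $2\pi$-periodic, with Fourier series $f(\theta) = 2\sum_{k\ge 1}\frac{q^k}{k}\cos(k\theta)$. In the notation of Theorem \ref{thm:PairVarianceFormula} its coefficients are therefore
\begin{equation*}
\widehat{f}(0)=0, \qquad \widehat{f}(k)=\frac{q^k}{k}, \quad k\geq 1.
\end{equation*}
Since $\mathrm{Var}[\tfrac12 S_N(f)] = \tfrac14\mathrm{Var}[S_N(f)]$, Theorem \ref{thm:PairVarianceFormula} gives directly
\begin{equation*}
\mathrm{Var}[\log Z_N] = A_N(f)+B_N(f)-C_N(f)-D_N(f).
\end{equation*}

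It remains to evaluate each term with $\widehat f(k)=q^k/k$. The first term is $A_N(f)=\sum_{k=1}^{N-1} k^2\, q^{2k}/k^2=\sum_{k=1}^{N-1}q^{2k}$. The second is $B_N(f)=N(N-1)\sum_{k\ge N} q^{2k}/k^2$. In $C_N$ and $D_N$ the product $\widehat f(k)\widehat f(l)$ becomes $q^{k+l}/(kl)$, and the index sets and weights carry over unchanged. This reproduces the four displayed expressions exactly.

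Two minor points need checking.

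\begin{enumerate}
\item The hypotheses of Theorem \ref{thm:PairVarianceFormula} hold. The Fourier series converges absolutely and uniformly for $q\in(0,1)$, $f$ is smooth, and all four series ($B_N$, $C_N$, $D_N$) converge absolutely because $\sum_{k,l} q^{k+l}$ is finite. So the formula applies with no regularity concerns.
\item The normalization convention matches. Theorem \ref{thm:PairVarianceFormula} writes $f=\widehat f(0)+2\sum_{k\ge1}\widehat f(k)\cos k\theta$, and the factor $2$ in Theorem \ref{thm:PairStatistic} matches it. Hence $\widehat f(k)=q^k/k$ rather than $q^k/(2k)$, and the factor $\tfrac12$ in $\log Z_N=\tfrac12 S_N(f)$ exactly cancels the $\tfrac14$ in the variance identity.
\end{enumerate}

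There is no real obstacle. The only place care is needed is this bookkeeping of the constants $\tfrac12$, $\tfrac14$ and the factor $2$ in the cosine expansion. The constant term $\widehat f(0)$ is irrelevant: it would only shift $S_N$ by the deterministic amount $N^2\widehat f(0)$ and so would not affect the variance, and in any case it is zero here.
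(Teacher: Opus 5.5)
Your proposal is correct and matches the paper's argument: the paper obtains the corollary by writing $\log Z_N=\tfrac12 S_N(f)$ via Theorem~\ref{thm:PairStatistic}, reading off $\widehat f(0)=0$ and $\widehat f(k)=q^k/k$, and substituting into the Soshnikov--Wu formula of Theorem~\ref{thm:PairVarianceFormula}, where squaring the $\tfrac12$ cancels the $\tfrac14$. Your handling of the normalization constants and of the term-by-term substitution is accurate.
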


        \subsection{Self-averaging}
        We are now in position to prove $N \to \infty$ self-averaging of the free energy density 
        $\frac{1}{N}\log Z_N$ in the near-critical scaling where $q_N=e^{-\frac{c}{N}}.$

            \begin{thm}
                \label{thm:SelfAveraging}
                For any $\varepsilon >0$ we have

                    \begin{equation*}
                        \lim_{N \to \infty} \mathbb{H}_N \left( |N^{-1}\log Z_N - \mu_c| > \varepsilon \right)=0,
                    \end{equation*}

                \noindent
                where 

                    \begin{equation*}
                        \mu_c = \frac{1-e^{-c}}{c}+\int_1^\infty \frac{e^{-cx}}{x}\mathrm{d}x.
                    \end{equation*}

            \end{thm}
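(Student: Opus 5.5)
The plan is to use Chebyshev's inequality. Combined with Theorem \ref{thm:ExtensiveExpectation}, which already gives $N^{-1}\E\log Z_N \to \mu_c$, it then suffices to show
\begin{equation*}
\mathrm{Var}[\log Z_N] = o(N^2), \qquad q=q_N=e^{-c/N}.
\end{equation*}
Indeed, for $N$ large enough that $|N^{-1}\E\log Z_N-\mu_c|<\varepsilon/2$, we have
\begin{equation*}
\mathbb{H}_N\left(|N^{-1}\log Z_N-\mu_c|>\varepsilon\right)\leq \frac{4\,\mathrm{Var}[\log Z_N]}{\varepsilon^2N^2}.
\end{equation*}
To bound the variance I would use Corollary \ref{cor:VarianceFormula} with $q=q_N$. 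Since the terms $C_N$ and $D_N$ are nonnegative (all summands are positive when $q\in(0,1)$), it is enough to estimate them from below by zero and obtain
\begin{equation*}
\mathrm{Var}[\log Z_N]\leq A_N+B_N .
\end{equation*}

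The term $A_N$ is easy: $A_N\leq N-1$, trivially, so $A_N=o(N^2)$. The term $B_N$ is the real issue, since naively $N^2\sum_{k\geq N}q_N^{2k}/k^2$ is of order $N^2\cdot N^{-1}=N$ at worst. More precisely,
\begin{equation*}
B_N\leq N^2\sum_{k=N}^\infty \frac{1}{k^2}\leq N^2\cdot\frac{2}{N}=2N.
\end{equation*}
So $B_N=O(N)$ as well. Thus $\mathrm{Var}[\log Z_N]=O(N)$, which is far better than $o(N^2)$, and self-averaging follows, with fluctuations of order $\sqrt N$ at most.

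The only step needing care is the sign of $C_N$ and $D_N$: they enter with minus signs and are nonnegative, so dropping them gives an upper bound. This relies on the Soshnikov--Wu formula being exact, so that $A_N+B_N-C_N-D_N$ is genuinely the variance. If one distrusted the crude bound, one could alternatively bound $C_N$ and $D_N$ directly, but that is unnecessary here. I expect no real obstacle beyond checking that Theorem \ref{thm:ExtensiveExpectation} and the identification $\int_c^\infty e^{-x}x^{-1}\,\mathrm{d}x=\int_1^\infty e^{-cx}x^{-1}\,\mathrm{d}x$ match the stated $\mu_c$.
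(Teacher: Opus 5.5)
Your proof is correct and follows the same route as the paper: $N^{-1}\E\log Z_N\to\mu_c$ comes from Theorem~\ref{thm:ExtensiveExpectation}, and $\mathrm{Var}[\log Z_N]=O(N)$ comes from Corollary~\ref{cor:VarianceFormula}, with essentially the same bounds on $A_N$ and $B_N$. The only difference is that you discard $C_N$ and $D_N$ by nonnegativity (valid, since $N-|k-l|\ge 1$ and $k+l-N\ge 1$ on the respective index ranges), whereas the paper also bounds them above by $O(N)$; because they enter with a minus sign, those extra estimates are unnecessary for self-averaging, so your version is a legitimate shortcut.
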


            \begin{proof}
                Since we have already proved in Theorem \ref{thm:ExtensiveExpectation} that 
                this limit holds in expectation, it remains only to show that

                    \begin{equation*}
                        \lim_{N \to \infty }\frac{1}{N^2}\mathrm{Var}[\log Z_N]=0,
                    \end{equation*}

                \noindent
                and this follows from Corollary \ref{cor:VarianceFormula}. 
                
                First, for any $q \in (0,1)$ and $N \in \N$ we have

                    \begin{equation*}
                        \frac{1}{N}A_N = \frac{1}{N}\sum_{k=1}^{N-1}q^{2k}< q^2 <1,
                    \end{equation*}

                \noindent
                so 

                    \begin{equation*}
                        \lim_{N \to \infty}\frac{1}{N^2}A_N=0
                    \end{equation*}

                \noindent
                no matter how $q$ scales with $N$. 

                Second, for any $q \in (0,1)$ and $N \in \N$ we have

                    \begin{equation*}
                        \frac{1}{N}B_N = (N-1)\sum_{k=N}^\infty \frac{q^{2k}}{k^2}<
                        (N-1)\sum_{k=N}^\infty \frac{1}{k^2} < (N-1) \int_{N-1}^\infty\frac{\mathrm{d}x}{x^2}=1,
                    \end{equation*}

                \noindent
                so 

                    \begin{equation*}
                        \lim_{N \to \infty}\frac{1}{N^2}B_N=0
                    \end{equation*}

                \noindent
                no matter how $q$ scales with $N$. 

                Third, for any $q \in (0,1)$ and $N \in \N$ each term of the double sum $C_N$
                is symmetric in its indices and so the sum can be written

                    \begin{equation*}
                        C_N=2\sum_{d=1}^{N-1}(N-d)\sum_{l=N-d}^\infty \frac{q^{2l+d}}{l(l+d)}.
                    \end{equation*}

                \noindent
                For any $d \in \{1,\dots,N-1\}$ and $l \geq N-d$, we have $l(l+d) \geq N(N-d)$ and hence
                $\frac{N-d}{l(l+d)} \leq \frac{1}{N}$, which gives the bound

                    \begin{equation*}
                        C_N \leq \frac{2}{N}\sum_{d=1}^{N-1}\sum_{l=N-d}^\infty q^{2l+d}.
                    \end{equation*}

                \noindent
                Summing the inner geometric series,

                    \begin{equation*}
                        \sum_{l=N-d}^\infty q^{2l+d} = \frac{q^{2N-d}}{1-q^2},
                    \end{equation*}

                \noindent
                we have

                    \begin{equation*}
                        C_N \leq \frac{2}{N(1-q^2)}\sum_{d=1}^{N-1}q^{2N-d}.
                    \end{equation*}

                \noindent
                The remaining finite geometric sum is

                    \begin{equation*}
                        \sum_{d=1}^{N-1}q^{2N-d} = q^{2N-1} + \dots + q^{N+1}< q^{N+1}N
                    \end{equation*}

                \noindent
                and we arrive at the bound

                    \begin{equation*}
                        C_N \leq \frac{2q^{N+1}}{(1-q^2)}.
                    \end{equation*}

                \noindent
                Replacing $q$ with $q_N=e^{-\frac{c}{N}}$ we get

                    \begin{equation*}
                        C_N \leq \frac{2e^{-c}}{e^{\frac{c}{N}}-e^{-\frac{c}{N}}} = \frac{e^{-c}}{\sinh(\frac{c}{N})} \sim \frac{e^{-c}}{c}N,
                    \end{equation*}

                \noindent
                so 

                    \begin{equation*}
                        \lim_{N \to \infty} \frac{1}{N^2}C_N=0.
                    \end{equation*}

                Fourth, for any $q \in (0,1)$ and $N \in \N$ the double sum $D_N$ has indices in the range $1 \leq k,l \leq N-1$, so 
                $(N-k)(N-l)>0$. On the other hand, $(N-k)(N-l) = kl-N(k+l-N),$ so

                    \begin{equation*}
                        \frac{k+l-N}{kl}< \frac{1}{N}
                    \end{equation*}

                \noindent
                and we get the bound

                    \begin{equation*}
                        D_N < \frac{1}{N}\sum_{\substack{1 \leq k,l \leq N-1 \\ k+l \geq N+1}}q^{k+l}.
                    \end{equation*}

                \noindent
                Now group the terms of this double sum according to the value $m=k+l$, which is a 
                number in $\{N+1,\dots,2N-2\}.$ The number of pairs $(k,l)$ with $1 \leq k,l \leq N-1$
                and $k+l=m$ is $2N-1-m$, so

                    \begin{equation*}
                        D_N < \frac{1}{N}\sum_{m=N+1}^{2N-2}(2N-1-m)q^m\leq \frac{N-2}{N}\sum_{m=N+1}^{2N-2}q^m.
                    \end{equation*}

                \noindent
                We therefore have

                    \begin{equation*}
                        D_N < \sum_{m=N+1}^{2N-2}q^m < (N-2)q^{N+1}.
                    \end{equation*}

                \noindent
                Replacing $q$ with $q_N=e^{-\frac{c}{N}}$ we have

                    \begin{equation}
                        D_N < (N-2)e^{-c}e^{-\frac{c}{N}} \sim e^{-c}N,
                    \end{equation}
                
                \noindent
                so

                    \begin{equation*}
                        \lim_{N \to \infty}\frac{1}{N^2}D_N=0.
                    \end{equation*}

            \end{proof}

            The proof of Theorem \ref{thm:SelfAveraging} used only elementary manipulations
            of explicit sums to show that each of the constituents $A_N,B_N,C_N,D_N$ of $\mathrm{Var}[\log Z_N]$
            is $O(N).$ Comparing each of the quantities $\frac{1}{N}A_N(c),\frac{1}{N}B_N(c),\frac{1}{N}C_N(c),\frac{1}{N}D_N(c)$ 
            with Riemann sums as in 
            Theorem \ref{thm:ExtensiveExpectation} yields a more precise result. 

            \begin{thm}
                In the near-critical scaling $q_N=e^{-\frac{c}{N}},$ we have

                    \begin{equation*}
                        \mathrm{Var}[\log Z_N]\ \sim (\alpha_c+\beta_c-\gamma_c-\delta_c)N
                    \end{equation*}

                \noindent
                as $N \to \infty,$ where 

                    \begin{equation*}
                    \begin{split}
                            \alpha_c &= \int_0^1 e^{-2cx}\mathrm{d}x=\frac{1-e^{-2c}}{2c} \\
                            \beta_c &= \int_1^\infty \frac{e^{-2cx}}{x^2}\mathrm{d}x \\
                            \gamma_c &= \iint\limits_{\substack{x,y \geq 0 \\ \max(x,y) \geq 1\\ |x-y|\leq 1}} \frac{1-|x-y|}{xy} e^{-c(x+y)}\mathrm{d}x\mathrm{d}y \\
                            \delta_c &= \iint\limits_{\substack{0 \leq x,y \leq 1 \\ x+y \geq 1}} \frac{x+y-1}{xy} e^{-c(x+y)}\mathrm{d}x\mathrm{d}y.
                    \end{split}
                    \end{equation*}
            \end{thm}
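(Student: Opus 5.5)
Starting from Corollary \ref{cor:VarianceFormula}, we treat each of $A_N,B_N,C_N,D_N$ separately with $q=q_N=e^{-c/N}$, and show that $\frac1N A_N\to\alpha_c$, $\frac1N B_N\to\beta_c$, $\frac1N C_N\to\gamma_c$, $\frac1N D_N\to\delta_c$. Summing these limits gives the claim, provided $\alpha_c+\beta_c-\gamma_c-\delta_c>0$. If instead the combination vanished, the statement would have to be read as $\mathrm{Var}[\log Z_N]=(\alpha_c+\beta_c-\gamma_c-\delta_c)N+o(N)$. We will check positivity at the end, or simply adopt that reading. In each term we substitute $x=k/N$, $y=l/N$ and recognise a Riemann sum, exactly as in the proof of Theorem \ref{thm:ExtensiveExpectation}.

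The single sums are routine. For $A_N$, $\frac1N\sum_{k=1}^{N-1}e^{-2ck/N}$ is a Riemann sum of the bounded continuous function $e^{-2cx}$ on $[0,1]$, so it tends to $\alpha_c$; alternatively it can be summed exactly. For $B_N$, $\frac{N(N-1)}{N}\sum_{k\ge N}\frac{e^{-2ck/N}}{k^2}=\frac{N-1}{N}\sum_{k\ge N}\frac1N\frac{e^{-2c(k/N)}}{(k/N)^2}$. The summand $e^{-2cx}/x^2$ is positive and decreasing on $[1,\infty)$, so the sum is squeezed between $\int_1^\infty$ and $\int_{1-1/N}^\infty$, which gives $\beta_c$.

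For the double sums, note that $\frac{q^{k+l}}{kl}=\frac{1}{N^2}\frac{e^{-c(x+y)}}{xy}$. Hence
\begin{equation*}
\frac1N C_N=\sum \frac{1}{N^2}\,\frac{(1-|x-y|)\,e^{-c(x+y)}}{xy},
\end{equation*}
where the sum runs over lattice points $(x,y)\in (N^{-1}\mathbb{Z}_{>0})^2$ with $\max(x,y)\ge1$ and $1/N\le|x-y|\le 1-1/N$. Similarly,
\begin{equation*}
\frac1N D_N=\sum\frac{1}{N^2}\,\frac{(x+y-1)\,e^{-c(x+y)}}{xy}
\end{equation*}
over $x,y\le 1-1/N$ with $x+y\ge 1+1/N$. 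These are two-dimensional Riemann sums over the regions defining $\gamma_c$ and $\delta_c$. In $D_N$ the region is compact and stays away from the axes, since $x+y\ge1$ with $x,y\le1$ forces $x,y\ge0$ and the integrand is bounded by $1/\max(x,y)\cdot$const. More simply, the bound $\frac{x+y-1}{xy}\le1$ from the proof of Theorem \ref{thm:SelfAveraging} shows the integrand is bounded by $1$ on the region. It is also continuous there, apart from the corner points $(1,0)$ and $(0,1)$, where it still stays bounded. Since the boundary of the region has zero area, the Riemann sum converges to $\delta_c$.

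The main obstacle is $C_N$, because its region is unbounded and touches the axes: for instance $x$ small and $y\in[1,1+x]$. There the weight $1/(xy)$ is tempered only by the factor $1-|x-y|\le x$ when $y\ge1\ge x$. The plan has three steps.
\begin{enumerate}
\item Use symmetry, as the paper does, writing $l$ and $d=k-l\ge1$.
\item Split the region into a tail $x+y\ge R$ and a bulk. On the tail, the bound $\frac{N-d}{l(l+d)}\le\frac1N$ from the proof of Theorem \ref{thm:SelfAveraging}, rescaled, shows the integrand is at most $e^{-c(x+y)}$ pointwise. This gives a uniform tail estimate $O(e^{-cR}R)$ for both the sum and the integral.
\item On the bulk, the same pointwise bound shows the integrand is bounded and continuous off a null boundary set, so dominated Riemann-sum convergence applies.
\end{enumerate}
Letting $N\to\infty$ and then $R\to\infty$ gives $\frac1N C_N\to\gamma_c$. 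The discrepancies between the discrete constraints ($|k-l|\ge1$, $k+l\ge N+1$, and so on) and their continuum versions only affect $O(N)$ lattice points in each sum. Each of these points carries weight $O(1/N^2)$, so they contribute $o(1)$ after dividing by $N$.
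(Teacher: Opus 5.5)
Your proposal is correct and follows the paper's route. The paper likewise rewrites $N^{-1}A_N$, $N^{-1}B_N$, $N^{-1}C_N$, $N^{-1}D_N$ as Riemann sums and leaves the convergence as a calculus exercise, and your argument fills in the details it omits: the bounds $\frac{N-d}{l(l+d)}\le\frac{1}{N}$ and $\frac{k+l-N}{kl}<\frac{1}{N}$ from the proof of Theorem~\ref{thm:SelfAveraging} make the integrands bounded by $e^{-c(x+y)}$, and the tail truncation handles the unbounded $C_N$ region.

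The one loose end is that the promised check that $\alpha_c+\beta_c-\gamma_c-\delta_c>0$ never appears; the paper does not address it either. So what you have actually proved is $\mathrm{Var}[\log Z_N]=(\alpha_c+\beta_c-\gamma_c-\delta_c)N+o(N)$, which gives the stated $\sim$ only once the constant is known to be nonzero.
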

             
            \begin{proof}
               Normalizing the four constituents of $\mathrm{Var}[\log Z_N]$ by $N$ and 
               writing them in Riemann sum form, we have

                    \begin{equation*}
                        \begin{split}
                            \frac{A_N}{N} &= \frac{1}{N}\sum_{k=1}^{N-1} q_N^{2k}\\
                            \frac{B_N}{N} &= (1-\frac{1}{N})\frac{1}{N}\sum_{k=N}^\infty \frac{q_N^{2k}}{(k/N)^2} \\
                            \frac{C_N}{N} &= \frac{1}{N^2}\sum_{\substack{k,l \geq 1 \\
                                1 \leq |k-l| \leq N-1 \\ \max(k,l) \geq N}} \frac{1-|\frac{k}{N}-\frac{l}{N}|}{(k/N)(l/N)}q_N^{k+l}\\
                            \frac{D_N}{N} &= \frac{1}{N^2}\sum_{\substack{1 \leq k, l \leq N-1 \\k+l \geq N+1}}
                            \frac{\frac{k}{N}+\frac{l}{N}-1}{(k/N)(l/N)}q_N^{k+l}.
                        \end{split} 
                    \end{equation*}

                \noindent
                On the macroscopic scale $k=xN$ and $l=yN$, for $q_N=e^{-\frac{c}{N}}$ we have

                    \begin{equation*}
                        q_N^{2k} = e^{-2cx}
                        \quad\text{and}\quad
                        q_N^{k+l}= e^{-c(x+y)},
                    \end{equation*}

                \noindent
                and the summation ranges converge to the stated regions in $\R$ or $\R^2$.
                Thus the proof is a calculus exercise along the same lines as the proof of Theorem \ref{thm:ExtensiveExpectation}.
            \end{proof}

        \subsection{Free energy fluctuations}
        In the near-critical scaling where fugacity scales with particle number as 
        $q_N=e^{-\frac{c}{N}}$, we have established that the free energy $\log Z_N$ becomes
        extensive, and that its first two cumulants have large $N$ asymptotics

            \begin{equation}
                \E \log Z_N \sim \mu_c N \quad\text{and}\quad \mathrm{Var}[\log Z_N] \sim \sigma_c^2N,
            \end{equation}

        \noindent
        where $\sigma_c = \sqrt{\alpha_c+\beta_c-\gamma_c-\delta_c}$. In fact,
        the following stronger statement holds. 

            \begin{thm}
            \label{thm:CLT}
                As $N \to \infty$, we have

                    \begin{equation*}
                        \frac{\log Z_N - \E \log Z_N}{\sqrt{N}} \implies \mathcal{N}(0,\sigma_c^2).
                    \end{equation*}
            \end{thm}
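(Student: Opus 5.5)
The natural route is the method of cumulants. Since $\mathrm{Var}[\log Z_N]\sim\sigma_c^2N$ by the preceding theorem, it is enough to show that for every fixed $k\geq 3$ the $k$th cumulant satisfies
\begin{equation*}
\kappa_k(\log Z_N)=o(N^{k/2})\quad\text{as } N\to\infty,\qquad q_N=e^{-\frac{c}{N}}.
\end{equation*}
In fact I would aim for the stronger bound $\kappa_k(\log Z_N)=O_k(N)$, matching the linear growth of the variance. Then $\kappa_k\big((\log Z_N-\E\log Z_N)/\sqrt N\big)\to 0$ for $k\ge3$, while the second cumulant tends to $\sigma_c^2$. This gives convergence to $\mathcal N(0,\sigma_c^2)$, because the Gaussian is determined by its moments.

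\textbf{Step 1: expand the cumulants through traces.} Using \eqref{eqn:Miwa}, multilinearity gives
\begin{equation*}
\kappa_k(\log Z_N)=\sum_{d_1,\dots,d_k\geq1}\prod_{j=1}^k\frac{q_N^{d_j}}{d_j}\,\kappa\big(|\Tr U_N^{d_1}|^2,\dots,|\Tr U_N^{d_k}|^2\big).
\end{equation*}
Each joint cumulant on the right can in turn be expanded into joint cumulants of the variables $\Tr U_N^{\pm d_j}$.

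Two facts control these. First, Diaconis--Shahshahani exactness: mixed moments of traces of total degree at most $N$ agree with those of independent complex Gaussians $\sqrt{d}Z_d$. Consequently all cumulants of order $\geq3$ of the trace family vanish when $\sum d_j\le N$. Second, rotation invariance forces the cumulant to vanish unless positive and negative degrees balance.

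This splits the sum into two parts.
\begin{itemize}
\item For small degrees (all $d_j\lesssim N/k$) the only surviving contributions come from products of second cumulants arranged in a cycle (the $|Z|^2$-type terms). These give $\sum_d (q_N^{d}/d)^k d^k\cdot O_k(1)=\sum_{d\lesssim N}q_N^{kd}=O(N)$.
\item For large degrees I need a uniform bound on the trace cumulants.
\end{itemize}

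\textbf{Step 2: large degrees.} For $d\ge N$, Rains' theorem says the eigenvalues of $U_N^d$ are iid uniform. Hence $\Tr U_N^d$ is a sum of $N$ iid unimodular variables, and its cumulants of each order are $O(N)$. For intermediate $d$, and for mixed families, I would use the determinantal representation of CUE: joint cumulants of linear statistics $\sum_m g_j(\theta_m)$ equal cyclic sums $\sum_{\sigma}\Tr(g_{\sigma(1)}K_N\cdots g_{\sigma(r)}K_N)$ minus lower terms, with $K_N$ the Dirichlet-kernel projection. For monomials $g_j=e^{\pm i d_j\theta}$ these traces are explicit lattice-point counts, namely the number of integers $a\in[0,N)$ with all partial sums $a+\sum_{i\le s}(\pm d_{\sigma(i)})\in[0,N)$. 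Every such count is $\le N$, and it vanishes unless the signed degrees sum to zero.

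Inserting this into Step 1, the weights $\prod_j q_N^{d_j}/d_j$ together with the balance constraint leave a $(k-1)$-fold sum. Bounding it by $N\cdot\prod_j(\text{something summable after scaling }d=xN)$ should give $O_k(N)$. The mechanism is the same one that made $C_N$ and $D_N$ equal to $O(N)$ in the proof of Theorem \ref{thm:SelfAveraging}: the factors $1/d_j$ compensate the $N$-sized lattice counts, and $q_N^{d}=e^{-cd/N}$ cuts off at $d\sim N$.

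\textbf{Main obstacle.} The hard step is the combinatorial bookkeeping in Step 2: converting the joint cumulant of the quadratic quantities $|\Tr U^{d}|^2$ into cumulants of traces (a sum over partitions connecting the $2k$ trace factors), and showing that every connected term is $O(N)$ uniformly in the degrees once weighted. The trivial bound $N$ per lattice count is not enough by itself. One needs that a connected configuration on $2k$ trace factors contributes at most one overall factor of $N$ after summing over the free degrees with weights $1/d_j$; that is, each additional summation over a degree $d\sim xN$ is paid for by its $1/d$ factor.

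I would first try to prove this by induction on $k$, mimicking the Soshnikov--Wu variance computation behind Theorem \ref{thm:PairVarianceFormula}. If that becomes unwieldy, an alternative is to truncate. Write $\log Z_N=L+T$, where $L$ collects $d\le \eta N$ and $T$ collects $d>\eta N$. By Diaconis--Shahshahani applied up to order $k$, $L$ has exactly Gaussian-chi-square cumulants, which are $O(N)$. For $T$, show $\mathrm{Var}(T)\le \epsilon(\eta)N$ via Corollary \ref{cor:VarianceFormula}, and then handle $T$ by a separate CLT for the Rains-type regime.
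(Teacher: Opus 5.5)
\paragraph{Assessment.} Your overall strategy has the same shape as the paper's. The paper gives no proof here; it only asserts that the Soshnikov--Wu cumulant analysis of pair statistics adapts to the $N$-dependent $f_N$. Your plan fits that: cumulants, reduction to $\kappa_k(\log Z_N)=o(N^{k/2})$ for $k\ge 3$, the Leonov--Shiryaev expansion into joint cumulants of traces, and exact Gaussianity when the total degree is at most $N$. However, the proposal has a genuine gap at the decisive step. You never bound the connected terms in which some degree exceeds $N/k$.

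\paragraph{Why the tools you list are not enough.} The bound you have, $|\kappa(B)|\le CN$ for each block $B$ of trace factors, is not merely lossy; it is insufficient. Take $k=4$ and the connected partition in which block $B_1$ contains both factors of $|\Tr U_N^{d_1}|^2$, block $B_2$ contains both factors of $|\Tr U_N^{d_2}|^2$, and the factors of $|\Tr U_N^{d_3}|^2$ and $|\Tr U_N^{d_4}|^2$ are split crosswise. Balance forces $d_3=d_4=m$. With $d_1,d_2>N$ neither block vanishes, and the weight is $q_N^{d_1+d_2+2m}/(d_1d_2m^2)$. The factor $N\cdot N$ from the blocks then gives a total of order $N^2\sum_m m^{-2}\asymp N^2$, which is not $o(N^2)$. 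A ring of $k/2$ such blocks gives order $N^{k/2}$ for every even $k$. Rains' theorem does not help, because it describes only one power $U_N^d$ at a time.

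Your fallback also fails. Exactness for $\kappa_k(L)$ needs $\eta\le 1/k$. Then $\mathrm{Var}(L)\le \eta N$, so $\sqrt{\mathrm{Var}(T)}\ge(\sigma_c-\sqrt\eta+o(1))\sqrt N$, and $T$ carries essentially all the variance rather than being small. In any case, separate limit theorems for the correlated pieces $L$ and $T$ would not give one for $L+T$.

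\paragraph{The missing idea.} What you need is a block bound that sees the \emph{smallest} degree: for $r\ge 2$,
$|\kappa(\Tr U_N^{e_1},\dots,\Tr U_N^{e_r})|\le C_r\min(N,|e_1|,\dots,|e_r|)$.

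To prove it, let $g_j=e^{ie_j\theta}$ act by multiplication and let $K_N$ be the projection onto $\mathrm{span}\{e^{ia\theta}:0\le a<N\}$. Then $\log\E e^{\sum_j t_j\Tr U_N^{e_j}}=\log\det(I+(e^{h}-1)K_N)$ with $h=\sum_j t_jg_j$. Write $h'=\sum_{j\ge2}t_jg_j$ and $B=e^{h'}-1$. Use $K_N^2=K_N$ and $K_N(I+BK_N)^{-1}=(I+K_NB)^{-1}K_N$. The $t_1$-derivative at $t_1=0$ then equals $\Tr(K_Ng_1)$, which is constant in the other $t_j$, plus $\Tr\bigl(X(t)[g_1,K_N]\bigr)$. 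Here $X(t)=(I+K_NB)^{-1}K_Ne^{h'}$ is bounded uniformly in $N$ on a small polydisc. The commutator has rank at most $2\min(|e_1|,N)$, and a Cauchy estimate finishes the bound.

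\paragraph{Finishing the argument.} View a connected partition with blocks $B_1,\dots,B_b$ as a connected graph on the blocks. Edge $j$ runs from the block containing $\Tr U_N^{d_j}$ to the block containing $\Tr U_N^{-d_j}$. Balance says $(d_j)$ is a circulation, so it is determined by its values on the $k-b+1$ edges off a spanning tree. Bound the root block by $CN$. Bound every other block by $Cd_j$, where $j$ is its parent edge; this cancels that edge's $1/d_j$. Each free edge then costs $\sum_{d\ge1}q_N^d/d=\log\frac{1}{1-q_N}\sim\log N$. Hence
$|\kappa_k(\log Z_N)|\le C_kN(\log N)^k=o(N^{k/2})$ for $k\ge3$.

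So your sharper target $O_k(N)$ is unnecessary. With this lemma your route becomes a self-contained proof, more explicit than the paper's sketch. Without it, the step that carries the theorem is only asserted.
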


        Theorem \ref{thm:CLT} can be deduced from the aforementioned work of Soshnikov and Wu \cite{SoWu},
        which analyzes cumulants of pair statistics

            \begin{equation}
                S_N(f) = \sum_{m,n=1}^N f(\theta_m-\theta_n)
            \end{equation}

        \noindent
        of the CUE corresponding to real even test functions 

            \begin{equation}
                f(\theta) = \widehat{f}(0) + 2\sum_{k=1}^\infty \widehat{f}(k) \cos(k\theta)
            \end{equation}

        \noindent
        whose Fourier coefficients satisfy 

            \begin{equation}
            \label{eqn:StaticLimit}
                \lim_{k \to \infty} k \widehat{f}(k) = C \neq 0.
            \end{equation}

        \noindent
        Their asymptotic analysis of the cumulants of $S_N(f)$ shows that 

            \begin{equation}
                \frac{S_N(f)-\E S_N(f)}{\sqrt{N}} \implies \mathcal{N}(0,\sigma_f)
            \end{equation}

        \noindent
        as $N \to \infty.$ The results of \cite{SoWu} apply in our setting, albeit not automatically since we 
        are dealing with the pair statistic $S_N(f_N)$ corresponding to a function

            \begin{equation}
                f_N(\theta) = 2\sum_{k=1}^\infty \frac{q_N^k}{k}\cos(k\theta)=
                2\sum_{k=1}^\infty \frac{e^{-\frac{ck}{N}}}{k}\cos(k\theta)
            \end{equation}

        \noindent
        which varies with $N$. In our situation, the static limit \eqref{eqn:StaticLimit}
        is replaced by the macroscopic profile

            \begin{equation}
            \label{eqn:MacroscopicLimit}
               q_N^{Nx} = e^{-cx}.
            \end{equation}

        \noindent
        However, the cumulant analysis carried out by Soshnikov and Wu \cite{SoWu}
        can be adapted to this scaling, and in particular the suboptimal partitions
        in their combinatorial decomposition remain suppressed simply because $q_N<1$. The only place where the dependence of 
        $f_N = -\log(1-2q_N\cos\theta+q_N^2)$ on $N$ matters is in evaluating the contribution from 
        the asymptotically non-negligible ``optimal'' partitions identified in \cite{SoWu}. In the fixed $f$ setting
        these are controlled by the static limit \eqref{eqn:StaticLimit}, whereas 
        in our $N$-dependent setting they are controlled by the macroscopic profile
        \eqref{eqn:MacroscopicLimit}, but this has no effect on the combinatorial structure of the Soshnikov-Wu argument. 
        In the next section we make use of the law of large numbers for the free energy density $N^{-1}\log Z_N$
        given by Theorem \ref{thm:SelfAveraging}, but we do not need the corresponding central limit theorem 
        given by Theorem \ref{thm:CLT} so we do not give the proof.

    \section{Quenched Limit Laws}
    \label{sec:Quenched}
    We now generalize the results of the previous section in order to obtain a quenched law of large
    numbers for the size of a random sample from near-critical CUE-Schur measure. 

    \subsection{Convergence of derivatives}
    For $U_N$ a Haar unitary, the nonnegative series  

        \begin{equation}
            Z_N(t) = \sum_{\lambda \in \Y_N} e^{-\frac{t|\lambda|}{N}}|s_\lambda(U_N)|^2
        \end{equation}

    \noindent
    and 

         \begin{equation}
            F_N(t) = \frac{1}{N}\sum_{d=1}^\infty 
            \frac{e^{-\frac{td}{N}}}{d}|\Tr U_N^d|^2
        \end{equation}

    \noindent
    converge to define random positive smooth functions of $t \in (0,\infty)$ whose 
    derivatives can be computed by termwise differentiation in $t$.
    Indeed, $Z_N(t)$ is the partition function of the CUE-Schur measure 
    $\P_N=\P_N(e^{-\frac{t}{N}},U_N)$, and $F_N(t)=N^{-1}\log Z_N(t)$ is the 
    corresponding free energy density. The main result of the previous section, Theorem \ref{thm:SelfAveraging},
    gives pointwise convergence of $F_N(t)$ in Haar measure to the limit

        \begin{equation}
            \Phi(t) = \frac{1-e^{-t}}{t}+\int_1^\infty \frac{e^{-tx}}{x}\mathrm{d}x.
        \end{equation}

    The derivatives of the free energy density are the functions

        \begin{equation}
            F_N^{(k)}(t) = \frac{(-1)^k}{N^{k+1}}\sum_{d=1}^\infty d^{k-1}
            e^{-\frac{td}{N}}|\Tr U_N^d|^2, \quad k \in \N.
        \end{equation}

    \noindent
    In particular, $F_N(t)$ is not just convex, it is completely monotone : 
    the function $(-1)^kF_N^{(k)}(t)$ is positive on $(0,\infty)$ for every 
    $k \in \N.$  
    Thus, not only can we use convexity to deduce convergence 
    in probability of $F_N'(t)$ to $\Phi'(t)$ from Theorem \ref{thm:SelfAveraging}, 
    as one does in spin glass theory \cite{Chen},
    we can iterate to do the same for every derivative. 
    
    \begin{prop}
        \label{prop:Derivatives}
        For each fixed $k \in \N$ and every $\varepsilon > 0,$ we have

            \begin{equation*}
                \lim_{N \to \infty} \H_N\left(|F_N^{(k)}(t)-\Phi^{(k)}(t)|>\varepsilon\right)=0
            \end{equation*}

        \noindent
        for all $t \in (0,\infty).$ 
    \end{prop}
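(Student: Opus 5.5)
Fix $t>0$. The argument is an induction on $k$, and at every stage it uses the elementary convexity lemma from spin glass theory. The lemma says the following. Suppose $G_N$ are random differentiable convex functions on an open interval $I$, and $G_N(s)\to g(s)$ in probability for every $s\in I$, where $g$ is deterministic and differentiable at $t$. Then $G_N'(t)\to g'(t)$ in probability.

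The lemma follows from a sandwich. For $h>0$ small, convexity gives
\begin{equation*}
\frac{G_N(t)-G_N(t-h)}{h}\leq G_N'(t)\leq \frac{G_N(t+h)-G_N(t)}{h}.
\end{equation*}
The outer difference quotients converge in probability to the deterministic quotients of $g$, because only finitely many points are involved. Those deterministic quotients are within $\varepsilon/2$ of $g'(t)$ once $h$ is small. Choosing $h$ first and then $N$ large bounds $\H_N(|G_N'(t)-g'(t)|>\varepsilon)$ by something that tends to zero.

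For the induction, complete monotonicity says that $(-1)^jF_N^{(j)}$ is convex on $(0,\infty)$ for every $j\geq 0$, since its second derivative is $(-1)^jF_N^{(j+2)}=(-1)^{j+2}F_N^{(j+2)}>0$. The base case $j=0$ is Theorem \ref{thm:SelfAveraging} with $c=t$. It gives $F_N(s)\to\Phi(s)$ in probability for every $s\in(0,\infty)$.

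For the inductive step, assume $(-1)^{j}F_N^{(j)}(s)\to(-1)^j\Phi^{(j)}(s)$ in probability for every $s>0$. Apply the lemma with $G_N=(-1)^jF_N^{(j)}$ and $g=(-1)^j\Phi^{(j)}$. This yields $F_N^{(j+1)}(s)\to\Phi^{(j+1)}(s)$ in probability for every $s$, and iterating up to $j+1=k$ proves the proposition.

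The inductive step needs $\Phi$ to be $C^\infty$ on $(0,\infty)$. This is clear: $(1-e^{-t})/t=\int_0^1e^{-tx}\,dx$ and $\int_1^\infty e^{-tx}x^{-1}\,dx$ can both be differentiated under the integral sign for $t>0$. Convexity of $g$ is not needed, since the deterministic difference quotients converge to $g'(t)$ simply by differentiability.

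The step I expect to need the most care is that the lemma requires pointwise convergence at \emph{every} $s$ in a neighborhood of $t$, not only at $t$. The induction must therefore carry the statement for all $s>0$ at once, which it does. Beyond that, one has to manage the order of limits ($h$ before $N$) carefully and check that the finitely many events in the sandwich have vanishing probability.
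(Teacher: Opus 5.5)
Your proposal is correct and follows the route the paper indicates. Complete monotonicity makes each $(-1)^jF_N^{(j)}$ convex, and the standard spin-glass convexity argument (the difference-quotient sandwich) is iterated starting from Theorem \ref{thm:SelfAveraging}. You also supply details the paper leaves implicit: carrying convergence at every $s>0$ through the induction, sending $h\to 0$ before $N\to\infty$, and checking that $\Phi$ is smooth.
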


    \subsection{Convergence of cumulants}
    As in Section \ref{sec:Extensive}, fix a positive constant $c > 0$ and let $\P_N = \P_N(e^{-\frac{c}{N}},U_N)$
    be the corresponding near-critical CUE-Schur measure on $\Y_N$. Since $\P_N$ is a random measure depending on 
    the disorder $U_N$, ensemble averages

        \begin{equation}
            \langle f\rangle= \sum_{\lambda \in \Y_N} f(\lambda)\P_N(\lambda)
        \end{equation}

    \noindent
    are random variables on $(\group{U}_N,\H_N)$. The random smooth positive function on 
    $(-c,\infty)$ defined by 

        \begin{equation}
            L_N(t) = \frac{Z_N(t+c)}{Z_N(c)}
        \end{equation}

    \noindent
    is just such an ensemble average: we have

        \begin{equation}
            L_N(t) = \sum_{\lambda \in \Y_N} e^{-\frac{t|\lambda|}{N}}\P_N(\lambda) = \langle e^{-tNX_N}\rangle,
        \end{equation}

    \noindent
    where $X_N(\lambda)=N^{-2}|\lambda|.$ The significance of $L_N(t)$ is that
    the $k$th derivative of its logarithm at $t=0$ is $(-1)^kN^kc_k(X_N),$
    where $c_k(X_N)$ is the $k$th cumulant of $X_N$. On the other hand, 

        \begin{equation}
            \log L_N(t) = \log Z_N(t+c)-\log Z_N(c) = NF_N(t+c)-NF_N(c),
        \end{equation}

    \noindent
    so we obtain 

        \begin{equation}
            (-1)^kN^{k-1}c_k(X_N) = F_N^{(k)}(c),
        \end{equation}

    \noindent
    and Proposition \ref{prop:Derivatives} gives a quenched law of large numbers for 
    each fixed cumulant of $X_N.$

        \begin{prop}
            \label{prop:Cumulants}
            For each fixed $k \in \N$ and every $\varepsilon >0,$ we have

            \begin{equation*}
               \lim_{N \to \infty}\H_N\left( \left| N^{k-1}c_k(X_N)- (-1)^k\Phi^{(k)}(c)\right|> \varepsilon\right)=0,
            \end{equation*}

        \noindent
        where 

            \begin{equation*}
                (-1)^k\Phi^{(k)}(c)=\int_0^1 x^ke^{-cx}\mathrm{d}x + \int_1^\infty x^{k-1}e^{-cx}\mathrm{d}x.
            \end{equation*}
        \end{prop}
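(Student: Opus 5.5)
The plan is to deduce the statement directly from Proposition \ref{prop:Derivatives} via the identity $(-1)^kN^{k-1}c_k(X_N) = F_N^{(k)}(c)$ derived just above. After that, only the explicit form of $(-1)^k\Phi^{(k)}(c)$ remains to be computed.

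\emph{Step 1: the cumulant identity.} First I would justify the identity rigorously. For $t>-c$, the series $Z_N(t+c)=\sum_\lambda e^{-(t+c)|\lambda|/N}|s_\lambda(U_N)|^2$ converges together with all its $t$-derivatives, because the number of $\lambda\in\Y_N$ with $|\lambda|=n$ and the size of $|s_\lambda(U_N)|^2$ (at most $\dim \W_N^\lambda{}^2$) both grow polynomially in $n$. So $L_N(t)=\langle e^{-tNX_N}\rangle$ is the moment generating function of $-NX_N$ under $\P_N$, and it is finite near $t=0$. Its logarithm is therefore the cumulant generating function. The $k$th derivative at $0$ is $c_k(-NX_N)=(-1)^kN^kc_k(X_N)$. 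On the other hand, $\log L_N(t)=N(F_N(t+c)-F_N(c))$, whose $k$th derivative at $0$ is $NF_N^{(k)}(c)$. Equating the two gives $(-1)^kN^{k-1}c_k(X_N)=F_N^{(k)}(c)$.

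\emph{Step 2: apply Proposition \ref{prop:Derivatives}.} By the identity, the event $\{|N^{k-1}c_k(X_N)-(-1)^k\Phi^{(k)}(c)|>\varepsilon\}$ coincides exactly with $\{|F_N^{(k)}(c)-\Phi^{(k)}(c)|>\varepsilon\}$. Proposition \ref{prop:Derivatives} at $t=c$ sends the Haar measure of the latter to zero. If Proposition \ref{prop:Derivatives} also needs proof, I would use the standard convexity argument. Each $(-1)^{k}F_N^{(k)}$ is convex and converges in probability pointwise (by induction, starting from Theorem \ref{thm:SelfAveraging}), and the limit $\Phi^{(k)}$ is differentiable. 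Hence the difference quotients $(F_N^{(k)}(t\pm h)-F_N^{(k)}(t))/(\pm h)$ sandwich $F_N^{(k+1)}(t)$ and converge in probability to the corresponding quotients of $\Phi^{(k)}$. Letting $h\to0$ then gives convergence of $F_N^{(k+1)}(t)$. This iterated step is where the real work lies. The subtle point is that convexity must be applied to $(-1)^kF_N^{(k)}$, which holds by complete monotonicity, and that pointwise convergence in probability at finitely many points suffices for the sandwich.

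\emph{Step 3: compute $(-1)^k\Phi^{(k)}(c)$.} Write $\Phi(t)=\int_0^1 e^{-tx}\,\mathrm{d}x+\int_1^\infty x^{-1}e^{-tx}\,\mathrm{d}x$, using $\frac{1-e^{-t}}{t}=\int_0^1e^{-tx}\,\mathrm{d}x$. Differentiating under the integral sign $k$ times is justified by dominated convergence on $t\ge c/2>0$, since $x^{k-1}e^{-cx/2}$ is integrable on $[1,\infty)$. Each derivative brings down a factor $-x$, so
\[
(-1)^k\Phi^{(k)}(c)=\int_0^1x^ke^{-cx}\,\mathrm{d}x+\int_1^\infty x^{k-1}e^{-cx}\,\mathrm{d}x,
\]
as claimed. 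The main obstacle is Step 2's inductive convexity argument, if it must be supplied. Given Proposition \ref{prop:Derivatives}, the statement is essentially immediate.
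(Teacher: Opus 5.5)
Your proposal is correct and follows the paper's own route. Like the paper, you use the identity $(-1)^kN^{k-1}c_k(X_N)=F_N^{(k)}(c)$, obtained from $\log L_N(t)=NF_N(t+c)-NF_N(c)$, apply Proposition \ref{prop:Derivatives} at $t=c$, and get the integral formula for $(-1)^k\Phi^{(k)}(c)$ by differentiating under the integral sign; your added justifications (finiteness of $L_N$ near $t=0$ and the convexity induction behind Proposition \ref{prop:Derivatives}) supply details the paper leaves implicit.
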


    The polynomial decay in Haar probability of each cumulant of $X_N(\lambda)=N^{-2}|\lambda|$ beyond the first implies
    a quenched law of large numbers for $X_N$ under near-crtical CUE-Schur measure $\P_N=\P_N(e^{-\frac{c}{N}},U_N).$
    Proposition \ref{prop:Cumulants} gives

        \begin{equation}
           c_1(X_N)= \langle X_N\rangle \longrightarrow -\Phi'(c)=\frac{1-e^{-c}}{c^2}
        \end{equation}

    \noindent
    in Haar measure $\H_N$ as $N \to \infty.$ Furthermore, according to Proposition \ref{prop:Cumulants}
    the variance $c_2(X_N)$ of $X_N$ converges to zero in $\H_N$ as $N \to \infty,$ which gives the following.

        \begin{thm}
            \label{thm:QuenchedLLN}
            For any $\delta>0,$ the random variable 

                \begin{equation*}
                    P_N(\delta) = \P_N\left(\left|X_N-\frac{1-e^{-c}}{c^2} \right|>\delta \right)
                \end{equation*}

            \noindent
            satisfies

                \begin{equation*}
                    \lim_{N \to \infty} \H_N \left( P_N(\delta)>\varepsilon\right)=0
                \end{equation*}

            \noindent
            for every $\varepsilon > 0.$
        \end{thm}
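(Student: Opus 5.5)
The plan is to apply Chebyshev's inequality under the random measure $\P_N$ and then control the resulting quenched mean and variance using Proposition \ref{prop:Cumulants} with $k=1$ and $k=2$. Write $m=\frac{1-e^{-c}}{c^2}=-\Phi'(c)$. For fixed $\delta>0$, the triangle inequality gives, on the event $|c_1(X_N)-m|\le\delta/2$,
\begin{equation*}
P_N(\delta)\le \P_N\left(|X_N-\langle X_N\rangle|>\tfrac{\delta}{2}\right)\le \frac{4\,c_2(X_N)}{\delta^2},
\end{equation*}
where $c_2(X_N)=\langle X_N^2\rangle-\langle X_N\rangle^2$ is the variance of $X_N$ under $\P_N$. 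This Chebyshev step is deterministic given $U_N$. It is legitimate because all moments of $|\lambda|$ under $\P_N$ are finite for $q_N<1$: $Z_N(t)$ is finite for all $t>0$, so $L_N$ is smooth near $0$.

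Hence, for any $\varepsilon>0$,
\begin{equation*}
\H_N(P_N(\delta)>\varepsilon)\le \H_N\left(|c_1(X_N)-m|>\tfrac{\delta}{2}\right)+\H_N\left(c_2(X_N)>\tfrac{\varepsilon\delta^2}{4}\right).
\end{equation*}
The first term tends to zero by Proposition \ref{prop:Cumulants} with $k=1$, since $N^{0}c_1(X_N)\to -\Phi'(c)=m$ in Haar probability. For the second term, Proposition \ref{prop:Cumulants} with $k=2$ says that $N c_2(X_N)$ converges in probability to the finite constant $\Phi''(c)=\int_0^1x^2e^{-cx}\,\mathrm{d}x+\int_1^\infty xe^{-cx}\,\mathrm{d}x$. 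So on an event of Haar probability tending to one we have $c_2(X_N)\le (\Phi''(c)+1)/N$, which is below $\varepsilon\delta^2/4$ once $N$ is large. Therefore the second term also tends to zero.

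The only genuine input is Proposition \ref{prop:Cumulants}, which rests on Proposition \ref{prop:Derivatives}. Assuming those, the main point needing care is confirming the identity $(-1)^kN^{k-1}c_k(X_N)=F_N^{(k)}(c)$ for $k=1,2$. This requires differentiating $\log L_N(t)$ at $t=0$ termwise, which is justified by the uniform convergence of the derivative series on compact subsets of $(0,\infty)$ around $t=c$. Beyond that the argument is routine. I would also check the value of $-\Phi'(c)$: differentiating $\Phi(t)=\int_0^1e^{-tx}\,\mathrm{d}x+\int_1^\infty x^{-1}e^{-tx}\,\mathrm{d}x$ gives
\begin{equation*}
-\Phi'(c)=\int_0^1xe^{-cx}\,\mathrm{d}x+\int_1^\infty e^{-cx}\,\mathrm{d}x,
\end{equation*}
and this should be matched against the stated centering $\frac{1-e^{-c}}{c^2}$. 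Any discrepancy there is a matter of the constant, not of the method: the proof shows concentration of $X_N$ around $-\Phi'(c)$.
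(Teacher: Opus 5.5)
Your proposal is correct and follows the paper's route. The paper also deduces the theorem from Proposition \ref{prop:Cumulants}: with $k=1$ the quenched mean converges in $\H_N$ to $-\Phi'(c)$, and with $k=2$ the quenched variance $c_2(X_N)$ tends to $0$ in $\H_N$; your Chebyshev-plus-union-bound step just writes out what the paper leaves implicit.

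The constant you flagged does match. Since $\int_0^1 xe^{-cx}\,\mathrm{d}x=\frac{1-e^{-c}}{c^2}-\frac{e^{-c}}{c}$ and $\int_1^\infty e^{-cx}\,\mathrm{d}x=\frac{e^{-c}}{c}$, you get $-\Phi'(c)=\frac{1-e^{-c}}{c^2}$ exactly.
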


    \section*{AI Tool Disclosure}
    ChatGPT 5.5 Pro was used for proofreading, which led to a significant correction in the stated value of the constant
    $\nu_c$ in Theorem \ref{thm:ExtensiveExpectation}. ChatGPT 5.5 Pro was also used to compute the 
    numerical value stated in Equation \eqref{eqn:RamanujanValue} and to typeset the bibliography. These uses 
    of AI tools notwithstanding, the ideas and results presented in the paper --- as well as 
    the text itself --- were human generated (by the author).

\end{document}